\def\block(#1,#2)#3{\multicolumn{#2}{|c|}{\multirow{#1}{*}{$ #3 $}}}
\newcommand{\dollar}[0]{\$}
\newcommand{\mymatrix}[2]{\left( \begin{array}{#1} #2\end{array} \right)}
\newcommand{\myvector}[1]{\mymatrix{c}{#1}}
\newcommand{\mypar}[1]{\left( #1 \right)}
\newcommand{\modm}{\mathtt{MOD_m}}
\newcommand{\Leq}{\mathtt{EQ}}
\newcommand{\modtwothree}{\mathtt{MOD23}}
\newcommand{\dyck}{\mathtt{DYCK}}
\newcommand{\abu}[1]{{\color{blue}#1}}
\newcommand{\ozlem}[1]{{\color{red}#1}}
\begin{document}

\title{New Results on Vector and Homing Vector Automata\footnote{Many of these results were obtained during Yakary\i lmaz's visit to Bo\u{g}azi\c{c}i University in July-August 2017.}}

\author{\"{O}zlem Salehi}

\address{ Bo\u{g}azi\c{c}i University, Department of Computer Engineering \\ 
Bebek 34342, \.{I}stanbul, Turkey\\
	\email{ozlem.salehi@boun.edu.tr}} 
    
\author{Abuzer Yakary{\i}lmaz}

\address{University of Latvia, Center for Quantum Computer Science\\ 
R\={\i}ga, Latvia \\
  \email{abuzer@lu.lv}
}
    
\author{A. C. Cem Say}

\address{ Bo\u{g}azi\c{c}i University, Department of Computer Engineering \\ 
Bebek 34342, \.{I}stanbul, Turkey\\
	\email{say@boun.edu.tr}} 
    
\markboth{\"{O}. Salehi, A. Yakary{\i}lmaz, A. C. C. Say}
{New Results on Vector and Homing Vector Automata}

\maketitle

\begin{abstract} 
We present several new results and connections between various extensions of finite automata through the study of vector automata and homing vector automata. We show that homing vector automata outperform extended finite automata when both are defined over $ 2 \times 2 $ integer matrices. We study the string separation problem for vector automata and demonstrate that generalized finite automata with rational entries can separate any pair of strings using only two states. Investigating stateless homing vector automata, we prove that a language is recognized by stateless blind deterministic real-time version of finite automata with multiplication iff it is commutative and its Parikh image is the set of nonnegative integer solutions to a system of linear homogeneous Diophantine equations.   
\end{abstract}

\keywords{Homing vector automata; group automata; finite automata.}
\section{Introduction}

Extending finite automata with an external storage is a well studied topic of automata theory. In this manner, various extensions of finite automata such as quantum and probabilistic models can be viewed as finite automata augmented with a vector as memory and the computation process can be seen as a series of vector and matrix multiplications. 

With the motivation of the matrix multiplication view of programming, we have previously introduced vector automata (VA) \cite{SYS13}, which are finite automata equipped with a vector that is multiplied with an appropriate matrix at each step. Only one of the entries of the vector can be checked for equivalence to a rational number and the computation is successful if it ends in an accept state with a successful equivalence test. We have shown that this model is closely related to various extensions of finite automata including counter automata \cite{FMR67}, finite automata with multiplication (FAM) \cite{ISK76}, and generalized finite automata \cite{Tu68}.

In many of the models which extend finite automata, a computation is successful if the register is equal to its initial value at the end of the computation. Applying this idea to vector automata, we have introduced the homing vector automaton (HVA) \cite{SS15}, which can check its vector for equivalence to its initial value and accepts a string if the computation ends with the initial vector in an accept state. We have shown that there exist close connections between homing vector automata \cite{SSD16} and extended finite automata \cite{DM00}. In fact, homing vector automata can be seen as a special case of rational monoid automata \cite{RK10}.

The close connections among the models exhibited so far motivates us to further investigate vector automata and homing vector automata. In this respect, the aim of this paper is to present some new results and establish further relationships among the various extensions of finite automata, through the study of these two models. We also provide some additional results about vector automata and homing vector automata, which may further advance the research in the topic.

We first focus on extended finite automata and show that allowing postprocessing does not change the computational power of the model. It is an open question whether the class of languages recognized by extended finite automata over $ 3 \times 3 $ integer matrices includes that of one-way nondeterministic blind three counter automata. We prove that one-way three-dimensional nondeterministic blind homing vector automata (NBHVA) with integer entries can simulate any nondeterministic blind multicounter machine. We answer an open question from \cite{SSD16}, by proving that one-way NBHVAs are more powerful than extended finite automata \cite{DM00} when both are defined on the set of $ 2 \times 2$ integer matrices.

Given two strings, a finite automaton (DFA) is said to separate them if it accepts one and rejects the other. The string separation problem, which asks for the minimum number of states needed for accomplishing this task, was introduced by Goral{\v{c}}{\'\i}k and Koubek \cite{GK86}, and the best known upper bound was given in \cite{Ro89}. The results from \cite{GK86} provide a logarithmic lower bound in the length of the strings. The problem of finding a generic tight bound is still open \cite{DESW11}. 

We study the problem of string separation, and show that deterministic blind vector automata with vectors of size two can distinguish any string from any other string in blind mode (without the need to check the entries of the vector during the computation) even when they are restricted to be stateless (i.e. with only one state). This result implies that generalized finite automata with rational entries can distinguish any pair of strings by using only two states. We also present some results on finite language recognition.

Stateless machines \cite{YDI08,IKO10,KMO09} have been investigated by many researchers, motivated by their relation to membrane computing and P systems  \cite{Pau00}, which are stateless models inspired from biology. While vector automata can simulate their classical states in their vectors by using longer vectors, this is not the case for homing vector automata. This leads us to investigate stateless homing vector automata in more detail. 

Our study on stateless homing vector automata yields a characterization for the class of languages recognized by stateless real-time deterministic FAMs without equality (0-DFAMW) \cite{ISK76}. It turns out that a language is recognized by a 0-DFAMW iff it is commutative and its Parikh image is the set of nonnegative solutions to a system of linear homogeneous Diophantine equations.  When the computation is nondeterministic, then any language recognized by a stateless real-time nondeterministic FAM without equality is commutative. We conclude by providing some further examples and observations about language recognition power of stateless homing vector automata.

\section{Background} \label{sec: back}

\subsection{Preliminaries}

The following notation will be used throughout the paper. The set of states is $Q = \{ q_1,\ldots,q_n \}$ for some $ n \geq 1 $, where  $ q_1 $ is the initial state, unless otherwise specified. $Q_a \subseteq Q$ is the set of accept state(s). The input alphabet is denoted by $\Sigma$ and we assume that it never contains $ \dollar $ (the right end-marker). For a given string $ w \in \Sigma^* $, $w^r$ denotes its reverse, $ |w| $ denotes its length, $ w[i] $ denotes its $i$'th symbol, $ |w|_{\sigma} $ denotes the number of occurrences of symbol $ \sigma $ in $ w $ and $ \mathtt{L}_w $ denotes the singleton language containing only $ w $. For a given language $L$, its complement is denoted by $\overline{L}$. The power set of a set $S$ is denoted $ \mathcal{P}(S) $.

For a string $ w \in \Sigma^* $ where $ \Sigma=\{\sigma_1,\sigma_2,\dots,\sigma_k\} $ is an ordered alphabet, the Parikh image of $ w $ is defined as $ \phi(w)=(|w|_{\sigma_1}, |w|_{\sigma_2},\dots, |w|_{\sigma_k}) $. For a language $ L $, $ \phi(L) = \{\phi(w)| w \in L\} $. 

A subset $ S \subseteq \mathbb{N}^n  $ is a \textit{linear} set if $ S=\{v_0 + \Sigma_{i=1}^k c_iv_i | c_1,\dots,c_k \in \mathbb{N}\} $ for some $ v_0,\dots,v_k \in\mathbb{N}^n  $. A \textit{semilinear} set is a finite union of linear sets. A language is called \textit{semilinear} if $ \phi(L) $ is semilinear.

For a machine model $A$, $\mathfrak{L}(A)$
denotes the class of all languages recognized by machines of type $A$. We denote the class of context-free languages by $ \mathsf{CF} $ and the set of recursively enumerable languages by $ \mathsf{RE} $.

 For a given row vector $ v $, $ v[i] $ denotes its $i$'th entry. Let $ A_{k \times l} $ be a $ k \times l $ dimensional matrix. $ A[i,j] $ denotes the entry in the $ i $'th row and $ j $'th column of $ A $. Given matrices $ A_{k\times l} $ and $ B_{m \times n}  $, their tensor product is defined as

\[
A \otimes B _{km \times ln} =\mymatrix{ccc}{ A[1,1]B & \cdots & A[1,l]B  \\ \vdots &  & \vdots \\ A[k,1]B &\dots &A[k,l]B}.
\]

\subsection{Machine definitions}

In this section, we give definitions of the various models which will appear throughout the paper. 

An input string $ w $ is placed on a tape in the form $ w\dollar$. Most of the time, we focus on real-time computation, where the tape head moves right at each step. A machine is \textit{deterministic} if the next move of the machine is uniquely determined, and \textit{nondeterministic} if there may exist more than one possible move at each step.

When we want to specify the number of states of a machine, we add an $ n $- (or $(n)$- to avoid any confusion) to the front of the model name where $ n $ is the number of states. We will examine stateless models, i.e. one-state automata where the single state is accepting, in more detail. It is clear that if the single state is non-accepting, then the automaton can recognize only the empty language. When a machine is stateless, we denote this by adding 0- to the front of the abbreviation for the model name.

A \textit{(real-time) deterministic $k$-counter automaton} (D$ k $CA) \cite{FMR67} is a deterministic finite automaton equipped with $ k $ counters. The counters are initially set to 0 and updated at each step by $c \in \{-1,0,1\}^k$, based on the current state, scanned symbol and the current status of the counters. The status of the counters is given by $\theta  \in \{=,\neq\}^k $, where $ = $ and $ \neq  $ denote whether the corresponding counter values equal zero or not, respectively. An input string is accepted if the machine is in an accept state at the end of the computation. A \textit{(real-time) deterministic blind $k$-counter automaton} \cite{Gr78} D$ k $BCA is a D$ k $CA which can check the value of its counters only at the end of the computation. An input string is accepted by a blind counter automaton if the machine is in an accept state and all counter values are equal to 0 at the end of the computation.

A \textit{generalized finite automaton} (GFA) \cite{Tu69} is  a
5-tuple $ G=(Q,\Sigma,\{A_{\sigma \in \Sigma}\}, v_0, f),$
where the $A_{\sigma \in \Sigma}$'s are $|Q|\times |Q|$ real valued
transition matrices, and $v_0$ and $f$ are the real valued initial row vector and
final column vector, respectively. The acceptance value for an input string $w
\in \Sigma^*$ is defined as $f_{G}(w)=v_oA_{w[1]}\cdots A_{w[|w|]}f$. 

There are various ways to define the language recognized by a generalized finite automaton. In this paper, we are interested in the class  $\textup{S}^=$, which contains languages of the form $L=(G,=\lambda)=\{w\in \Sigma^* \mid f_{\mathcal{G }}(w)=\lambda\}$, where $ \lambda \in \mathbb{R}$ is called the cutpoint \cite{Tu69}. 

A GFA whose components are restricted to be rational numbers is called a \textit{Turakainen finite automaton} (TuFA) in \cite{Yak12} and the class of languages recognized by a TuFA in the same language recognition mode with a rational cutpoint is denoted by $\textup{S}^=_{\mathbb{Q}}$.

A \textit{(real-time) deterministic finite automaton with multiplication} (DFAM) \cite{ISK76} is a deterministic finite automaton equipped with a register holding a positive rational number. The register is initialized to 1 at the beginning of the computation and multiplied with a positive rational number at each step, based on the current state, scanned symbol and whether the register is equal to 1 or not. An input string is accepted if the computation ends in an accept state with the register value being equal to 1, after processing the right end-marker. A DFAM \textit{without equality} (DFAMW) is a DFAM which cannot check whether or not the register has value 1 during computation. Nondeterministic versions NFAM and NFAMW are defined analogously.\footnote{The original definition of FAMs is given for \textit{one-way} machines (1DFAM, 1NFAM, 1DFAMW, 1NFAMW) where the tape head is allowed to stay on the same input symbol for more than one step. The computation halts and the string is accepted when the machine enters an accept state with the tape head on the end-marker $\dollar$.}

Let $ M $ be a monoid. An \textit{extended finite automaton over $ M $} (\textit{$ M $-automaton}, \textit{monoid automaton}) \cite{DM00, Co05,Ka09} is a nondeterministic finite automaton equipped with a register which holds an element of the monoid $ M $. The register is initialized with the identity element of $ M $ and multiplied (operation is applied) by an element of the monoid based on the current state and the scanned symbol (or the empty string). An input string is accepted if the value of the register is equal to the identity element of the monoid and the computation ends in an accept state. When $ M $ is a group, then the model is called \textit{group automaton}.

Note that extended finite automata are nondeterministic by definition and they are blind in the sense that the register cannot be checked until the end of the computation. Computation of an extended finite automaton is not real-time since the machine is allowed to make $ \varepsilon $ transitions. The class of languages recognized by $ M $-automata is denoted by $ \mathfrak{L}(M) $. 

\subsection{Vector automata and homing vector automata}

A \textit{ (real-time) $k$-dimensional deterministic vector automaton} (DVA($k$))
\cite{SYS13} is a 7-tuple

\[
V = (Q,\Sigma,M,\delta,q_1,Q_a,v_0),
\]
where $ v_0 $ is the initial ($k$-dimensional, rational-valued) row vector, $ M $ is a finite set of $k \times k$-dimensional rational-valued matrices, and $ \delta $ is the transition function (described below) defined as
\[
\delta:Q \times \Sigma \cup \{\dollar\} \times \{=,\neq\} \rightarrow Q \times M.
\]
Let $ w\in \Sigma^* $ be a given input. The automaton $ V $ reads the sequence $ w\dollar $ from left to right symbol by symbol. It uses its states and its vector to store and process the information. In each step, it can check whether the first entry of the vector is equal ($=$) to 1 or not ($ \neq $). We call this feature the ``status" of the first entry. 

The details of the transition function are as follows. When $ V $ is in state $ q \in Q $, reads symbol  $ \sigma \in \Sigma \cup \{\dollar\} $, and the first entry status is $ \tau \in \{ =,\neq \} $, the transition
\[
\delta(q,\sigma,\tau) = (q',A)
\]
results in $V$ entering state $ q' \in Q $, and its vector being multiplied by $ A \in M $ from the right.

At the beginning of the computation, $ V $ is in state $ q_1 $ and the vector is $ v_0 $. Then, after reading each symbol, the state and vector are updated according to the transition function as described above. The input $ w $ is accepted if the final state is an accept state and the first entry of the final vector is 1 after processing the right end-marker $\dollar$. Otherwise, the input is rejected. The set of all accepted strings is said to be the language recognized by $ V $.

A \textit{ (real-time)  $k$-dimensional deterministic homing vector automaton} (DHVA($k$)) is defined in \cite{SS15}
as being different from vector automata in two ways: (1) Homing vector automata do not read the right end-marker after reading the input, so there is no chance of postprocessing and, (2) instead of checking the status of the first entry, a homing vector automaton checks whether the complete current vector is identical to the initial vector or not. Formally, the transition function $\delta$ is defined as $$\delta: Q \times \Sigma  \times \{=,\neq\}  \rightarrow Q\times M,$$ where $ = $  indicates
equality to the initial vector $ {v_0} $, and $ \neq $ indicates inequality. An input string is accepted if the computation ends in an accept state and the vector is equal to its initial value. 

The blind versions of these models, \textit{(real-time) $k$-dimensional deterministic blind vector automaton} (DBVA($ k $)) and \textit{(real-time) $k$-dimensional deterministic blind homing vector automaton} (DBHVA($ k $)) cannot check the status of the vector until the end of the computation. Therefore, the domain of the transition function changes to $ Q \times \Sigma \cup \{\dollar\} $ and $ Q \times \Sigma$ for vector automata and homing vector automata respectively. The acceptance condition is the same as in the non-blind case.

The definitions of \textit{(real-time) $k$-dimensional nondeterministic vector automaton}, abbreviated NVA($k$), and \textit{(real-time) $k$-dimensional nondeterministic homing vector automaton}, abbreviated NHVA($k$), are almost the same as that of the deterministic versions, except that the range of the transition function $ \delta $ is now defined as $ \mathcal{P}(Q \times M)$, which allows the machine to follow more than one computational path. An input string is accepted if and only if there is a path ending with the acceptance condition.

Abbreviations used for some model variants discussed so far are given in Table \ref{tab: abb}.  

	\begin{table} \caption{Abbreviations for some model names.}\label{tab: abb}{\footnotesize \begin{tabular}{|p{2.1cm}|p{1.7cm}|p{1.7cm}|p{2.2cm}|p{2.2cm}|} 
		\hline  &  Deterministic  &  Deterministic & Nondeterministic & Nondeterministic \\ 
		&   blind & &  blind & \\ 
		\hline Vector \hspace{0.8cm} automaton & \hspace{1.7cm} DBVA($k$) &  \hspace{1.7cm} DVA($ $k$ $) & \hspace{1.7cm} NBVA($k$) &  \hspace{1.7cm}NVA($k$) \\ 
		\hline Homing vector automaton &\hspace{0.7cm} DBHVA($k$) & \hspace{1.7cm} DHVA($k$) &  \hspace{1.7cm}NBHVA($k$) &  \hspace{1.7cm}NHVA($k$) \\ 
		\hline Counter automaton &\hspace{0.7cm} D$k$BCA & \hspace{1.7cm} D$k$CA &  \hspace{1.7cm}N$k$BCA &  \hspace{1.7cm}N$k$CA \\ 
		\hline Finite automata \hspace{0.9cm}   with\hspace{1cm} multiplication & \hspace{6cm}DFAMW & \hspace{1.7cm} DFAM & \hspace{1.7cm} NFAMW & \hspace{1.7cm} NFAM \\ 
		\hline 
		\end{tabular}}
	\end{table} 

When we do not want to explicitly state the dimension of the vector or the number of the counters, then we may omit $ k $ in the abbreviation when we talk about vector automata, homing vector automata and counter automata. Similarly, we may omit  $ D  $ and $ N $ in the abbreviation when we talk about a statement that is true for both cases.

\section{New results on homing vector automata and extended finite automata} \label{sec: relation }
In this section, we will focus on the relationship between extended finite automata and homing vector automata.

\subsection{Some lemmas on homing vector automata}
\label{sec: endmarker}

We will start by proving a lemma that will be used in the rest of the section. 

Homing vector automata are not allowed to perform postprocessing by definition, since they do not read the right end-marker. In this section, we show that allowing postprocessing does not bring any additional power to nondeterministic blind homing vector automata, as the postprocessing step can be handled by using some extra states. We prove the result for the more general case of 1NBHVAs, NBHVAs that are capable of making $ \varepsilon $ transitions. The computation of a 1NBHVA using end-marker ends once the machine processes the end-marker $ \dollar $.

HVAs using end-marker will be denoted by the abbreviation $\textup{HVA}_\$ $.

\begin{lemma}
	\label{thm: NBHVA-endmarker}
	Let $ L $ be a language  recognized by an $(n)$-$\textup{1NBHVA}_\$(k)$ $ V $. Then, $ L $ is also recognized by an $(n+2)$-$\textup{1NBHVA}(k)$ $ V' $.
\end{lemma}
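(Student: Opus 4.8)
The plan is to have $V'$ imitate $V$ transition-by-transition on the letters of $\Sigma$ and on $\varepsilon$, and to replace the single move of $V$ on the end-marker $\dollar$ by $\varepsilon$-moves that enter a freshly added accepting state, thereby relocating the final acceptance test from ``after $\dollar$'' into the state set. Since both machines are blind, no transition branches on the status of the vector, so the only behavioural difference to account for is the end-marker step. Every accepting run of $V$ on an input $w$ first reads $w$ (interleaved with $\varepsilon$-moves), reaching some state $q$ with vector $v$, and then makes exactly one move $(q',A)\in\delta(q,\dollar)$ that ends the computation; $w$ is accepted precisely when $q'\in Q_a$ and $vA=v_0$.

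Concretely, I keep the state set $Q$ together with every transition of $V$ on symbols of $\Sigma$ and on $\varepsilon$, but I declare no state of $Q$ to be accepting in $V'$. This is essential: in $V$, being in an accept state after reading $w$ does not yet mean acceptance, since the $\dollar$-move still has to be taken. I then add two new states $q_{n+1}$ and $q_{n+2}$, in the familiar pattern of a fresh initial state and a fresh final state. The state $q_{n+1}$ is the new start state and carries a single $\varepsilon$-move, with the identity matrix, to $q_1$; the state $q_{n+2}$ is the unique accepting state of $V'$ and is given no outgoing transition on any letter of $\Sigma$. Finally, for every end-marker transition $(q',A)\in\delta(q,\dollar)$ of $V$ with $q'\in Q_a$, I add to $V'$ an $\varepsilon$-move from $q$ to $q_{n+2}$ whose matrix is $A$; end-marker moves landing in non-accepting states need not be simulated, since they can never produce acceptance. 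The initial vector of $V'$ is again $v_0$ and its dimension is still $k$, so $V'$ is an $(n+2)$-$\textup{1NBHVA}(k)$.

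For correctness I would match runs in both directions. Given an accepting run of $V$ on $w$, the corresponding run of $V'$ begins with the $\varepsilon$-move $q_{n+1}\to q_1$, reproduces the reading of $w$ and the $\varepsilon$-moves inside $Q$ to reach state $q$ with vector $v$ once all of $w$ has been consumed, and then takes the added move $q\to q_{n+2}$ with matrix $A$; this leaves $V'$ in $q_{n+2}$ with vector $vA=v_0$ after the whole input has been read, so $w$ is accepted by $V'$. Conversely, any accepting run of $V'$ must terminate in $q_{n+2}$, which is reachable only through one of the added $\varepsilon$-moves; deleting that last move yields a run of $V$ that reads $w$ and ends in a state $q$ from which the associated $\dollar$-move accepts, so $w\in L$. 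The one point that truly needs care---and is the only real obstacle---is to forbid $V'$ from using an added $\varepsilon$-move before it has finished reading $w$. This is exactly why $q_{n+2}$ is given no transitions on input letters: a run that jumps to $q_{n+2}$ while input symbols remain is then stuck and cannot consume them, so it fails to read all of $w$ and is not an accepting run. Hence the relocation of the end-marker test creates no spurious acceptances, and it follows that $V'$ recognizes exactly $L$.
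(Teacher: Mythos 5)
Your construction is correct, and it shares the paper's skeleton: all inherited states are made non-accepting, and acceptance is rerouted into new nondeterministic transitions that enter a fresh accept state, with premature jumps killed by the fact that this state has no outgoing moves. The mechanism differs in one substantive point, though. The paper never simulates the $\dollar$-step as a standalone move: it folds $V$'s last action before the end-marker together with the end-marker into a single transition on $\sigma\in\Sigma\cup\{\varepsilon\}$ whose matrix is the product $A_\sigma A_\dollar$, and it spends its second extra state on a special \emph{initial} accept state (with no incoming transitions) solely to handle the empty string. You instead keep the end-marker step as its own $\varepsilon$-move with matrix $A$; this is arguably cleaner, since the empty string is then accepted through the uniform $\varepsilon$-path $q_{n+1}\rightarrow q_1\rightarrow\cdots\rightarrow q_{n+2}$ and needs no case split (indeed your fresh initial state $q_{n+1}$ does no real work beyond padding the state count to $n+2$). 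The trade-off is that the paper's folding introduces no new $\varepsilon$-transitions, so applied to a real-time $\textup{NBHVA}_\dollar$ it yields a real-time NBHVA; the paper implicitly relies on this stronger property later, e.g.\ in Theorem~\ref{thm: z22}, where $\mathtt{POW_r}$ is concluded to be recognizable by a real-time $\textup{NBHVA(2)}_{\mathbb{Z}^{2\times 2}}$, and in the corollary producing a 3-NBHVA(2) for finite languages. Your version proves the lemma exactly as stated for 1NBHVAs, but because your simulated $\dollar$-step is inherently an $\varepsilon$-move, it does not give that real-time strengthening.
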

\begin{proof}
	We construct $V'$ such that $V'$ mimics the transitions of $V$ on every possible symbol $\sigma \in \Sigma \cup \{\varepsilon\}$. In addition, we create new transitions to handle the postprocessing, which emulate  $ V $'s last action before reading the end-marker (which would end up in an accept state) and the end-marker ($ \sigma \dollar $) at once: At any point during the scanning, if reading $ \sigma $ would cause $V$ to switch to a state from which the end-marker would lead to an accept state, a new nondeterministic transition takes  $ V' $ to the  additional state, which is an accept state. During this transition, the vector is multiplied by a matrix of the form $ A_\sigma A_\dollar  $, where $ A_\sigma $ and $ A_\dollar $ are the corresponding matrices defined in the transition function of $ V $. All other states of $V'$, which are inherited from $V$, are designated to be non-accept states. 

Thus, $ V' $ simulates the computation of $ V $ on any non-empty string, and accepts the input in some computational path if and only if $ V $ accepts it. 

If $ L $ contains the empty string, we add one more  state that has the following properties: (i) it becomes the new initial state of the resulting machine, (ii) it is an accept state, (iii) it causes the machine to behave (i.e. transition) like the original initial state of $ V $ upon reading the first symbol, and (iv) there is no transition coming in to this state.
\end{proof}

The idea given in the proof of Lemma \ref{thm: NBHVA-endmarker} does not apply for non-blind models since the status of the vector may be changed after reading the last symbol of the input (just before reading the right end-marker). In fact, one can show that DHVAs using end-marker are more powerful than ordinary DHVAs in terms of language recognition by the witness language $\mathtt{NEQ}=\{ a^ib^j : i \neq j\}$ .
\\
In the following lemma, we show that any 1NBHVA with end-marker whose matrices are rational valued can be simulated by integer valued matrices in the cost of increasing the size of the vector by 2. The lemma is also valid for the deterministic and real-time models.

\begin{lemma}
	\label{thm: ratint}
For any given rational-valued $(n)$-$ \textup{1NBHVA}_\$(k)$ $ V $, there exists an integer-valued $(n)$-$ \textup{1NBHVA}_\$(k+2)$ $ V' $  that recognizes the same language.
\end{lemma}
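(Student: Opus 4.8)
The plan is to clear denominators and then use the two extra coordinates to carry the information that scaling destroys. First I would normalize so that everything is integral: since rescaling the initial vector by a nonzero constant leaves the homing condition $v_0(\prod A)=v_0$ unchanged, I may assume without loss of generality that $v_0\in\mathbb{Z}^k$. Let $d$ be a common denominator of all entries occurring in the matrices of $V$ (including the $\varepsilon$- and $\dollar$-matrices), and set $B_\sigma=dA_\sigma\in\mathbb{Z}^{k\times k}$ for every matrix $A_\sigma$ of $V$. In $V'$ I keep the same $n$ states, the same transition graph, and the same end-marker usage, and I take the initial vector to be $(v_0\mid 1\mid 1)$. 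For every ordinary transition of $V$ reading $\sigma\in\Sigma\cup\{\varepsilon\}$ with matrix $A_\sigma$, the corresponding transition of $V'$ multiplies by the block-diagonal integer matrix $\operatorname{diag}(B_\sigma,d,1)$. The invariant maintained is that after $t$ steps along any computation path the vector of $V'$ equals $(d^{\,t}\,v_0(\prod A)\mid d^{\,t}\mid 1)$; that is, the first $k$ coordinates are $d^{\,t}$ times the true (rational) vector of $V$, the coordinate $k+1$ records the scaling factor $d^{\,t}$, and the last coordinate is a constant $1$.

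The main obstacle is exactly that this uniform scaling breaks the homing test: when $V$ is genuinely at home, $v_0(\prod A)=v_0$, the first block of $V'$ equals $d^{\,t}v_0$, not $v_0$, so a naive construction would accept only the empty computation. I would resolve this by folding the comparison into the end-marker transition, using the tracked factor $d^{\,t}$ together with the constant $1$ to re-anchor the target. Concretely, each $\dollar$-transition of $V$ with matrix $A_\dollar$ (so $B_\dollar=dA_\dollar$) is replaced in $V'$ by
\[
C_\dollar=\mymatrix{ccc}{B_\dollar & 0 & 0 \\ -d\,v_0 & 0 & 0 \\ v_0 & 1 & 1},
\]
where $B_\dollar$ is the $k\times k$ block, $-d\,v_0$ and $v_0$ are $1\times k$ rows placed in the first $k$ columns, and the two $1$'s are scalars; all entries are integers because $v_0$ and $d$ are. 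Writing the current vector as $(s\mid p\mid r)$ with $p=d^{\,t}$ and $r=1$, one computes $(s\mid p\mid r)\,C_\dollar=(sB_\dollar-d\,p\,v_0+r\,v_0\mid r\mid r)$. Since $sB_\dollar=d^{\,t+1}v_0(\prod A)A_\dollar=d^{\,t+1}u$ where $u=v_0(\prod A)A_\dollar$ is precisely the final vector of $V$, and $d\,p=d^{\,t+1}$, the first block becomes $d^{\,t+1}(u-v_0)+v_0$, while the last two coordinates become $1$ and $1$.

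Finally I would verify that $V'$ recognizes $L$. After the end-marker the vector of $V'$ is $\big(d^{N}(u-v_0)+v_0\mid 1\mid 1\big)$ for the appropriate power $N$, and this equals the initial vector $(v_0\mid 1\mid 1)$ if and only if $d^{N}(u-v_0)=0$, i.e.\ if and only if $u=v_0$, which is exactly $V$'s homing condition; the state reached is the same in both machines, so accepting paths correspond bijectively. Hence $V'$ is an integer-valued $(n)$-$\textup{1NBHVA}_\dollar(k+2)$ accepting $L$. Because the construction does not introduce any new $\varepsilon$-transitions and leaves the transition graph (hence determinism and real-time behavior) intact, the same argument yields the deterministic and real-time versions of the statement. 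I expect the only delicate points to be the bookkeeping of $\varepsilon$-moves in the exponent $t$ (handled uniformly, since every transition multiplies $p$ by $d$) and the justification that scaling the initial vector does not alter the language, both of which are routine.
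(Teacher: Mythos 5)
Your proposal is correct and follows essentially the same route as the paper's proof: clear denominators with a common factor $d$, track the accumulated power $d^{\,t}$ in coordinate $k+1$ and a constant $1$ in coordinate $k+2$, and fold a correction into the end-marker matrix so that the final vector equals $(v_0\mid 1\mid 1)$ exactly when the original homing condition $u=v_0$ holds. The only (cosmetic) difference is that you write the end-marker transition as a single combined matrix $C_\dollar$, whereas the paper factors it as the scaled diagonal matrix followed by a separate postprocessing matrix with a $-I$ block; the two agree up to an irrelevant sign.
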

\begin{proof}
	It is easy to see that, for any HVA, the initial vector $ v_0 $ can be replaced with the vector $ t v_0 $ for any rational $ t \neq 0 $ without changing the recognized language. Therefore, any HVA can be assumed to be defined with an integer-valued initial vector without loss of generality.

	Let the initial vector of $ V $ be
	$
	v_0 = \mypar{ i_1 ~~ i_2 ~~ \cdots ~~ i_k } \in \mathbb{Z}^k$. Furthermore, let	$ S_{i_a}, S_{i_b},S_{i_\varepsilon}$ and $S_{i_\dollar} $
	be the sets of $ k \times k $ rational-valued matrices of $ V $ to be applied  when $ V $ is in state $q_i$ and reads the symbols $ a $, $ b $, $ \varepsilon $ and $\dollar$, respectively. 
	
	The automaton $ V' $ is obtained from $V$ by modifying the initial vector and matrices.  We pick an integer $ c \in \mathbb{Z}$ such that when multiplied with $ c $, the matrices in $S_{i_a},S_{i_b},S_{i_\varepsilon}$ and $ S_{i_\dollar} $ contain only integer values, where $ 1 \leq i \leq n $. Then, we define the set $  S'_{i_\sigma} $ by letting
	\[
	v_0' = \mypar{ v_0 ~~ 1 ~~ 1 } \mbox{ ~and~ }
	A'_{i_{\sigma_j}} = \mymatrix{c|cc}{ c A_{i_{\sigma_j}} & 0 & 0 \\ \hline 0 & c & 0 \\ 0 & 0 & 1 },
	\]
	where $  A_{i_{\sigma_j}} \in  S_{i_{\sigma}} $, $ \sigma \in \{a,b,\varepsilon,\dollar\} $ and $ j $ enumerates the number of possible transitions for $ \sigma $ in state $ q_i $. $ v'_0 $ is the initial vector of $ V' $, and when in state $ q_i $, $V'$ multiplies the vector with an integer-valued matrix from the set $ S_{i_{a}}'$, $ S_{i_{b}}' $ or $ S_{i_{\varepsilon}}' $ upon reading the inputs $a$, $b$, and $ \varepsilon $ respectively. When in state $q_i$ and reading symbol $\dollar$, the vector is multiplied with a matrix from the set $  S''_{i_\dollar} $, which is obtained by multiplying the matrices in $ S_{i_\dollar}' $ with
	\[
	\mymatrix{ c|cc }{
		& 0 & 0 \\
		~~~ - I ~~~~~ &  \vdots & \vdots
		\\
		& 0 & 0
		\\ \hline
		i_1 ~~ i_2 ~~ \cdots ~~ i_k & 0 &  0 \\ 
		i_1 ~~ i_2 ~~ \cdots ~~ i_k & 1 &  1
	},
	\]
	
	where $ I $ denotes the identity matrix. 
	
	Let $ w \in \{a,b\}^* $ be a string of length $ l $, and let's consider a single computation path of length $ p+1 \geq l+1 $ for $ w $. The final vector is calculated as
	\[
	v_f = v_0 M_1  M_2 \cdots   M_{p} M_{p+1} 
	=
	\mypar{ j_1 ~~ j_2 ~~ \cdots ~~ j_k } \in \mathbb{Q}^k ,
	\]
	where $ M_i $ is the matrix used at the $ i $'th transition step.
	It is easy to see that 
	\[
	v_f' =  v_0' M_1' \cdots M_p'   M_{p+1}' 
	=
	\mypar{ c^{p+1} v_f ~~~ c^{p+1} ~~~ 1 }.
	\]
	(Note that the  equation above contains $ M_{p+1}' $, but not $ M_{p+1}'' $.) By setting $  c'= c^{p+1} $, we can rewrite $ v_f' $ as
	\[
	\mypar{ c'j_1 ~~~  c'j_2~~~ \cdots ~~~  c'j_k~~~ c'~~~ 1 }
	\in \mathbb{Z}^{k+2}.
	\]
	
	For accepted input strings,  $ v_f' $ holds $ c'$ times the initial vector. The postprocessing step, which is accomplished by a matrix from the set $  S''_{i_\dollar} $, helps setting the vector back to its initial value to satisfy the acceptance criteria.
	
	Hence, $ v_f'' = v_0' M_1' \cdots M_p'   M_{p+1}'' $ is
	\[
	v''_f
	=
	\mypar{  c'i_1-  c'j_1+ i_1  ~~~~ 
		c'i_2 -  c'j_2 + i_2 ~~~~
		\cdots
		~~~~  c'i_k-  c'j_k + i_k
		~~~~ 1 ~~~~ 1}.
	\]
	
	It is clear that $ v_0 = v_f $ if and only if $ v'_0 = v''_f $. Thus, rational-valued $ V $ and integer-valued $ V' $ recognize the same language.
\end{proof}

\begin{corollary}
	Rational-valued \textup{1NBHVA}s and integer-valued \textup{1NBHVA}s recognize  the same class of languages.
\end{corollary}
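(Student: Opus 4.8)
The plan is to chain the two preceding lemmas together with the trivial fact that integers are rationals, since the integer-valued matrices are closed under the products that appear in those constructions. One inclusion is immediate: every integer-valued $\textup{1NBHVA}$ is by definition also a rational-valued $\textup{1NBHVA}$, so the class of languages recognized by integer-valued $\textup{1NBHVA}$s is contained in the class recognized by rational-valued ones.

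For the converse I would start with an arbitrary rational-valued $\textup{1NBHVA}$ $V$ recognizing a language $L$ and first regard it as a rational-valued $\textup{1NBHVA}_\$$. To do this, for each state I append an end-marker transition that multiplies the vector by the identity matrix and stays in the same state. Since acceptance of a $\textup{1NBHVA}_\$$ is tested only after processing $\dollar$, this trivial postprocessing reproduces exactly the acceptance behavior of $V$, so the resulting $\textup{1NBHVA}_\$(k)$ still recognizes $L$. Next I would invoke Lemma \ref{thm: ratint} to obtain from this machine an integer-valued $\textup{1NBHVA}_\$(k+2)$ recognizing $L$. Finally, applying Lemma \ref{thm: NBHVA-endmarker} removes the end-marker at the cost of two additional states, producing a $\textup{1NBHVA}$ (without end-marker) for $L$.

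The only step requiring care is verifying that integrality survives the last application of Lemma \ref{thm: NBHVA-endmarker}. This holds because that construction introduces no new scalars: the inherited transitions and the initial vector are unchanged, while every freshly created transition matrix has the form $A_\sigma A_\dollar$, a product of matrices already present in the integer-valued machine, and integer matrices are closed under multiplication (the possible extra initial state for handling $\varepsilon \in L$ likewise introduces no new matrix entries). Hence the resulting machine is integer-valued and recognizes $L$, which gives the reverse inclusion and completes the proof.
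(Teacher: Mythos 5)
Your proof is correct and takes essentially the same route as the paper's: apply Lemma \ref{thm: ratint} to pass from a rational-valued machine (viewed as a $\textup{1NBHVA}_\$$) to an integer-valued $\textup{1NBHVA}_\$$, then apply Lemma \ref{thm: NBHVA-endmarker} to remove the end-marker. The extra details you supply (the trivial reverse inclusion, the identity-matrix end-marker transitions, and the observation that the products $A_\sigma A_\dollar$ keep the matrices integer-valued) are points the paper leaves implicit, and they are verified correctly.
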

\begin{proof}
	By using Lemma \ref{thm: ratint}, we can conclude that any rational-valued 1NBHVA can be simulated by an integer-valued 1NBHVA using the end-marker. Then, by using Lemma \ref{thm: NBHVA-endmarker}, we can remove the end-marker.
\end{proof}

\subsection{Extended finite automata and homing vector automata defined over integer matrices}

Before presenting the main results of the section, we observe that postprocessing does not bring additional power to extended finite automata.

\begin{lemma}
Let $ L $ be a language recognized by an extended finite automaton $ V $ using end-marker. Then $ L $ is also recognized by an extended finite automaton $ V' $ that does not use the end-marker.
\end{lemma}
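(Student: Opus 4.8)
The plan is to mimic the strategy already used in Lemma~\ref{thm: NBHVA-endmarker}, adapting it to the monoid/group setting of extended finite automata. The key observation is that an extended finite automaton is blind: its register is updated by monoid multiplication at each step but is never inspected until the very end, when acceptance requires both that the computation be in an accept state and that the register hold the identity. Because the register status plays no role in determining transitions, the end-marker is simply the last symbol read, triggering one final monoid multiplication before acceptance is checked. This means the effect of the end-marker can be absorbed into the transitions that lead into it, exactly as in the homing vector case.

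Concretely, I would construct $V'$ from $V$ as follows. Let $V$ have state set $Q$ with accept states $Q_a$. For every transition of $V$ on the end-marker $\dollar$ from a state $q$ to an accept state $q' \in Q_a$ that multiplies the register by a monoid element $m_\dollar$, and for every transition of $V$ on a symbol $\sigma \in \Sigma \cup \{\varepsilon\}$ from some state $p$ to $q$ multiplying the register by $m_\sigma$, I add to $V'$ a new nondeterministic transition on $\sigma$ from $p$ to a single fresh accept state $q_{acc}$, multiplying the register by the product $m_\sigma m_\dollar$. All original states of $V$ are made non-accepting in $V'$, and $q_{acc}$ has no outgoing transitions. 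In this way $V'$ guesses the step at which $V$ would have transitioned into an accept state on the end-marker, performs the combined multiplication in one move, and lands in $q_{acc}$; the register value at $q_{acc}$ equals the register value $V$ would have held after processing $\dollar$, so acceptance (register equal to identity) is preserved in both directions. As in Lemma~\ref{thm: NBHVA-endmarker}, the empty string is handled separately: if $\varepsilon \in L$ (that is, if $V$ reaches an accept state reading only $\dollar$ with the register returning to the identity), I designate a fresh initial accept state that behaves like the original start state upon reading the first symbol and has no incoming transitions.

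The correctness argument then splits into the two inclusions. For any nonempty accepted string $w$, an accepting run of $V$ on $w\dollar$ factors as a run on $w$ reaching some state $q$ followed by the $\dollar$-transition into $q' \in Q_a$; the corresponding run of $V'$ follows the same steps on the prefix of $w$ and uses the combined transition on the final symbol to reach $q_{acc}$ with the identical register value, hence accepts. Conversely, any accepting run of $V'$ must terminate in $q_{acc}$, and by construction the combined transition used there unpacks into a legal symbol-then-end-marker transition sequence of $V$ ending in an accept state with the same register element, so $V$ accepts $w\dollar$.

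I expect the only genuine subtlety to be the bookkeeping around $\varepsilon$-transitions: because extended finite automata may make $\varepsilon$-moves, the ``last action before the end-marker'' is not necessarily tied to consuming an input symbol, so the combined transition must also be allowed to fire on $\sigma = \varepsilon$, and one must make sure that chains of $\varepsilon$-moves leading into the end-marker transition are all captured. This is handled uniformly by letting $\sigma$ range over $\Sigma \cup \{\varepsilon\}$ in the construction above, so no essentially new difficulty arises; the monoid structure is used only insofar as multiplication is associative, which is what allows the two steps $m_\sigma$ and $m_\dollar$ to be fused into the single element $m_\sigma m_\dollar$.
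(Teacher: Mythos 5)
Your proposal is correct and follows essentially the same route as the paper: the paper's proof simply invokes the construction of Lemma~\ref{thm: NBHVA-endmarker} (fusing each transition leading into an accepting $\dollar$-move with that $\dollar$-move, here as the monoid product $m_\sigma m_\dollar$ for $\sigma \in \Sigma \cup \{\varepsilon\}$, making all inherited states non-accepting, and adding a fresh initial accept state when $\varepsilon \in L$), noting that this works because extended finite automata are nondeterministic, blind, and allow $\varepsilon$-moves. Your write-up is just a more explicit version of that same argument, with the correctness verified in both directions.
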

\begin{proof} 
	Extended finite automata are one-way and nondeterministic by definition. Hence, the proof idea of Lemma \ref{thm: NBHVA-endmarker} applies here as well. With the help of the additional transitions and extra states, the postprocessing step can be handled by an ordinary extended finite automaton without end-marker.  
\end{proof}

Let $ M $ be a multiplicative monoid of $ k \times k $ integer matrices. We denote by $\textup{HVA}(k)_{M} $ a $ k $-dimensional homing vector automaton whose matrices belong to $ M $. For the rest of the section, we are going to investigate $ M $-automata and $\textup{HVA}(k)_{M} $. $ \mathbb{Z}^{k \times k} $ denotes the multiplicative monoid of $ k \times k$ matrices with integer entries. 

In \cite{SDS16}, it is shown that $\mathsf{RE} \subseteq \mathfrak{L}\textup{(1NBHVA(4)}_{\mathbb{Z}^{4 \times 4}})$. Therefore, Lemma \ref{thm: ratint} says something new only about the class of languages recognized by 3-dimensional 1NBHVAs with integer entries. Note that any blind $ k $-counter automaton can be simulated by a one-dimensional homing vector automaton whose register is multiplied by $ k $ distinct primes and their multiplicative inverses. Using the fact that any 1N$ k $BCA can be simulated by a 1NBHVA(1), we show that the same result can be achieved by a 1NBHVA(3) whose matrices are restricted to have integer entries. Note that the result is also true for real-time NBHVAs.

\begin{theorem}\label{cor: z3}
	$ \bigcup_k\mathfrak{L}(\textup{1N$ k $BCA}) \subsetneq \mathfrak{L}\textup{(1NBHVA(3)}_{\mathbb{Z}^{3 \times 3}}) $.
\end{theorem}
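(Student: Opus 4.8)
The plan is to establish the two halves of the strict inclusion separately: first the containment $\subseteq$, which is essentially bookkeeping on top of the lemmas already proved, and then the strictness, which is where the real work lies.

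For the containment I would start from the two facts recalled just before the statement. Every $\textup{1N$k$BCA}$ is simulated by a rational-valued $\textup{1NBHVA(1)}$ whose single register holds $\prod_{i=1}^{k} p_i^{c_i}$, where $p_1,\dots,p_k$ are distinct primes and $c_i$ is the (possibly negative) content of counter $i$: incrementing or decrementing counter $i$ multiplies the register by $p_i$ or by $p_i^{-1}$, and by unique factorization the register returns to its initial value $1$ exactly when all counters are $0$, so the blind acceptance conditions match. Viewing this machine as a rational $\textup{1NBHVA}_\$(1)$ that performs a trivial identity multiplication on $\$$, I would apply Lemma \ref{thm: ratint} to obtain an integer-valued $\textup{1NBHVA}_\$(3)$ recognizing the same language, and then Lemma \ref{thm: NBHVA-endmarker} to discard the end-marker, ending with an integer-valued $\textup{1NBHVA(3)}$. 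This yields $\bigcup_k\mathfrak{L}(\textup{1N$k$BCA}) \subseteq \mathfrak{L}(\textup{1NBHVA(3)}_{\mathbb{Z}^{3 \times 3}})$, with the dimension landing at exactly $3$ since $1+2=3$.

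For the strictness the guiding idea is that a blind multicounter automaton is nothing but an automaton over the \emph{abelian} group $\mathbb{Z}^k$, so a run's net register value is a \emph{sum} of the per-step updates and therefore depends only on the multiset of transitions used, not on their order; by contrast, $\mathbb{Z}^{3 \times 3}$ is non-abelian and contains a free group of rank two. The matrices $A=\bigl(\begin{smallmatrix}1&2\\0&1\end{smallmatrix}\bigr)$ and $B=\bigl(\begin{smallmatrix}1&0\\2&1\end{smallmatrix}\bigr)$ generate a free subgroup of $\mathrm{SL}_2(\mathbb{Z})$, and their block-diagonal extensions $\mathrm{diag}(A,1)$, $\mathrm{diag}(B,1)$, together with the inverses of these, all lie in $\mathbb{Z}^{3 \times 3}$. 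I would take as a separating witness the word-problem (Dyck-type) language over $\{a,\bar a,b,\bar b\}$ consisting of those strings that reduce to the identity of this free group, and recognize it by a $\textup{1NBHVA(3)}$ that multiplies its vector by the matrix assigned to each letter and accepts exactly when the vector returns to its initial value. The third coordinate is there to ensure that the initial vector is moved by every nontrivial reachable product, so that ``the vector comes home'' faithfully encodes ``the matrix product equals the identity.''

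The main obstacle is the non-membership half: proving that this language is recognized by \emph{no} $\textup{1N$k$BCA}$. The witness is order-sensitive — $a\bar a b\bar b$ reduces to the identity while the Parikh-equivalent $ab\bar a\bar b$, a nontrivial commutator, does not — yet a blind multicounter register cannot see this distinction directly, since its value is permutation-invariant in the transitions used. Turning this into a contradiction requires handling the finite-state control, which \emph{can} see order: I would either run a direct interchange/pumping argument on two long accepted inputs that are Parikh-equivalent but group-inequivalent, swapping segments so that the finite control is driven through the same state sequence while the counters end at the same value, or, more cleanly, invoke the known strict separation between the classes of languages accepted by finite automata over abelian groups and over free groups of rank at least two, of which the word problem of the free group of rank two is the standard witness. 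Making the interchange airtight against the finite control, and exhibiting a witness whose recognizability in dimension exactly three is transparent, are the two points I expect to demand the most care.
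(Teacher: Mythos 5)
Your first half, the containment, is exactly the paper's argument: the prime encoding gives a $\textup{1NBHVA(1)}_{\mathbb{Q}^+}$, Lemma \ref{thm: ratint} turns it into an integer-valued $\textup{1NBHVA}_\dollar(3)$, and Lemma \ref{thm: NBHVA-endmarker} removes the end-marker; no issues there.

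The strictness half, however, has two genuine gaps. First, your witness machine is broken. With the block-diagonal matrices $\mathrm{diag}(A,1)$, $\mathrm{diag}(B,1)$ and their inverses, the third coordinate is inert: the action on a vector $(v~~t)$ is $(vM~~t)$, so ``the vector comes home'' is exactly the two-dimensional condition $vM=v$, and no choice of initial vector makes this equivalent to $M=I$. Indeed, every nonzero integer row vector is fixed by some nontrivial element of the free group $\langle A,B\rangle$: with $A=\mymatrix{cc}{1&2\\0&1}$ and $B=\mymatrix{cc}{1&0\\2&1}$, the vector $(1~~0)$ is fixed by $B$, the vector $(0~~1)$ is fixed by $A$, and in general the stabilizer of any rational row vector in $\mathrm{SL}_2(\mathbb{Z})$ contains nontrivial parabolic elements, a suitable power of which lies in the finite-index subgroup $\langle A,B\rangle$. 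So your automaton accepts, for instance, the single-letter string $b$, which is not in the word problem. Passing faithfully from ``register equals the identity'' to ``vector returns to its initial value'' is precisely the nontrivial content of the simulation of $\mathbf{F}_2$-automata by 1NBHVA(2)s that the paper cites from \cite{SSD16} in Theorem \ref{thm: z22}; if you want the word problem of $\mathbf{F}_2$ as a witness, you must invoke that result (or exhibit a provably stabilizer-free action), not block-diagonal padding.

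Second, the lower bound --- that no 1N$k$BCA accepts your witness --- is exactly the part you leave open, and your choice of witness makes it genuinely hard: the word problem of $\mathbf{F}_2$ is context-free, hence semilinear, so no Parikh-image argument can apply, and the interchange argument you sketch would have to be carried out in full against the finite-state control (or replaced by a precise citation from the $G$-automata literature, e.g.\ that a group whose word problem is accepted by a $\mathbb{Z}^n$-automaton must be virtually abelian). The paper sidesteps both difficulties at once by choosing the unary witness $\mathtt{UPOW'}=\{a^{n+2^n} \mid n\geq 0\}$: it is recognized by an $\textup{NBHVA(3)}_{\mathbb{Z}^{3\times 3}}$ by \cite{SDS16}, and it is not accepted by any 1N$k$BCA simply because blind multicounter languages are semilinear \cite{Gr78,Ib78} while a unary nonregular language is not. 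As it stands, your proposal does not establish strictness.
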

\begin{proof}
	Any 1N$ k $BCA can be simulated by a 1NBHVA(1)$ _\mathbb{Q^+} $ and any 1NBHVA(1)$ _\mathbb{Q^+} $ can be simulated by a 1NBHVA$_ \dollar $(3)$ _{\mathbb{Z}^{3 \times 3}} $ by Lemma $ \ref{thm: ratint} $. By using additional states, we can obtain an equivalent $\textup{1NBHVA(3)}_{\mathbb{Z}^{3 \times 3}} $ without end-marker by Lemma \ref{thm: NBHVA-endmarker}. The inclusion is proper since the unary nonregular language $  \mathtt{UPOW'=\{a^{n+2^n}\} | n\geq 0} $  can be recognized by a $\textup{NBHVA(3)}_{\mathbb{Z}^{3 \times 3}} $ \cite{SDS16}, which is not possible for 1N$ k $BCAs \cite{Gr78,Ib78}.
\end{proof}	

It is an open question whether $ \mathfrak{L}(\mathbb{Z}^{3 \times 3}) $ includes the class of languages recognized by 1N3BCAs. We cannot adapt Theorem \ref{cor: z3} to $ \mathbb{Z}^{3 \times 3} $-automata since the product of the matrices multiplied by the register in Lemma \ref{thm: NBHVA-endmarker} is not equal to the identity matrix and the acceptance condition for extended finite automata is not satisfied.

In \cite{SSD16} it is shown that the non-context-free language $ \mathtt{POW_r}=\{a^{2^n}b^n | n \geq 0\} $ can be recognized by a $ \textup{DBHVA(2)}_{\mathbb{Z}^{2 \times 2}} $. It is left open whether there exists a 1NBHVA with integer entries recognizing $ \mathtt{POW_r} $. Theorem \ref{thm: z22} answers two open questions from \cite{SSD16}, by demonstrating a 1NBHVA with integer entries recognizing $ \mathtt{POW_r} $ and revealing that 1NBHVAs are more powerful than the corresponding extended finite automata, when the matrices are restricted to the set $ \mathbb{Z}^{2 \times 2} $. By $ \mathbf{F}_2 $ we denote the free group of rank 2.

\begin{theorem}\label{thm: z22}
	$ \mathfrak{L}(\mathbb{Z}^{2 \times 2}) \subsetneq \mathfrak{L}(\textup{1NBHVA(2)} _{\mathbb{Z}^{2 \times 2}}) $.
\end{theorem}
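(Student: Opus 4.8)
The statement asserts both an inclusion and its strictness, so I would prove it in two halves.

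For the inclusion $\mathfrak{L}(\mathbb{Z}^{2 \times 2}) \subseteq \mathfrak{L}(\textup{1NBHVA(2)}_{\mathbb{Z}^{2 \times 2}})$, I would first observe that an extended finite automaton over $\mathbb{Z}^{2\times 2}$ can accept only along computation paths whose register matrices multiply to the identity; since the determinant is multiplicative and integer-valued, every matrix used on such a path must have determinant $\pm 1$. Hence transitions labelled by singular matrices never contribute to an accepting path and may be deleted, giving $\mathfrak{L}(\mathbb{Z}^{2\times 2}) = \mathfrak{L}(GL_2(\mathbb{Z}))$. Because $GL_2(\mathbb{Z})$ is finitely generated and virtually free, its word problem is context-free, and it is known that $\mathfrak{L}(GL_2(\mathbb{Z})) = \mathsf{CF}$; in particular $\mathfrak{L}(\mathbb{Z}^{2\times 2}) \subseteq \mathsf{CF} = \mathfrak{L}(\mathbf{F}_2)$. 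It therefore suffices to simulate an arbitrary $\mathbf{F}_2$-automaton by a $\textup{1NBHVA(2)}_{\mathbb{Z}^{2 \times 2}}$.

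The crux --- and the step I expect to be the main obstacle --- is that a homing vector automaton tests only whether the accumulated matrix $P$ \emph{fixes} its initial vector $v_0$ (that is, $v_0 P = v_0$), which is strictly weaker than $P = I$. A naive simulation using the standard Sanov embedding of $\mathbf{F}_2$ fails, since that group contains nontrivial parabolic elements and in fact every rational direction is fixed by some nontrivial element, so no rational $v_0$ has trivial stabilizer. To repair this I would instead realize $\mathbf{F}_2$ as a \emph{purely hyperbolic} (Schottky) free subgroup $H = \langle A, B \rangle \le SL_2(\mathbb{Z})$, obtained by a ping-pong argument on sufficiently high powers of two hyperbolic matrices with distinct axes. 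In such a group every nontrivial element is hyperbolic, hence has no eigenvalue $1$ and fixes no nonzero vector; thus for \emph{any} nonzero $v_0$ and every $P \in H$ we have $v_0 P = v_0 \iff P = I$. Replacing each free generator of the given $\mathbf{F}_2$-automaton by its image in $H$, keeping the states and the ($\varepsilon$-allowing) transition structure, and taking $v_0 = (1~~0)$, the resulting $\textup{1NBHVA(2)}_{\mathbb{Z}^{2 \times 2}}$ has an accepting path exactly when the original register returns to $I$ in an accept state, so the two machines recognize the same language. This yields $\mathsf{CF} \subseteq \mathfrak{L}(\textup{1NBHVA(2)}_{\mathbb{Z}^{2 \times 2}})$ and hence the inclusion.

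For strictness I would use the witness $\mathtt{POW_r} = \{a^{2^n} b^n : n \ge 0\}$. On one hand it lies in the right-hand class: the $\textup{DBHVA(2)}_{\mathbb{Z}^{2 \times 2}}$ recognizing it (or, equivalently, an explicit integer-matrix realization of the doubling relation $|w|_a = 2^{|w|_b}$ that is verified by the homing test) is a special case of a $\textup{1NBHVA(2)}_{\mathbb{Z}^{2 \times 2}}$, since deterministic blind real-time machines are subsumed by one-way nondeterministic blind ones. On the other hand $\mathtt{POW_r}$ is not context-free --- a routine application of the context-free pumping lemma to $a^{2^N}b^N$ breaks the exponential relation between the numbers of $a$'s and $b$'s --- so by the upper bound established above $\mathtt{POW_r} \notin \mathsf{CF} = \mathfrak{L}(\mathbb{Z}^{2\times 2})$. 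Combining the two halves gives $\mathfrak{L}(\mathbb{Z}^{2 \times 2}) \subsetneq \mathfrak{L}(\textup{1NBHVA(2)}_{\mathbb{Z}^{2 \times 2}})$, as claimed.
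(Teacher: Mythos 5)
Your first half (the inclusion) is correct, and it is actually more self-contained than the paper's treatment, which simply cites prior work for both steps: the conversion of a $\mathbb{Z}^{2\times 2}$-automaton into an $\mathbf{F}_2$-automaton, and the simulation of an $\mathbf{F}_2$-automaton by a $\textup{1NBHVA(2)}$ with integer matrices and initial vector $(1~~0)$. Your observation that the homing test $v_0P=v_0$ is strictly weaker than $P=I$, and that parabolic elements (e.g.\ $\left(\begin{smallmatrix}1&0\\2&1\end{smallmatrix}\right)$ fixes $(1~~0)$, and in fact every rational direction is fixed by some nontrivial element of any finite-index free subgroup) make a naive Sanov-based simulation unsound, is a genuine subtlety; your repair via a purely hyperbolic Schottky copy of $\mathbf{F}_2$ inside $SL_2(\mathbb{Z})$, where no nontrivial element has eigenvalue $1$, handles it cleanly.

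The gap is in the strictness half. You assert that $\mathtt{POW_r}=\{a^{2^n}b^n \mid n\geq 0\}$ is recognized by a $\textup{DBHVA(2)}_{\mathbb{Z}^{2\times 2}}$, i.e.\ a \emph{real-time, deterministic, end-marker-free} machine with integer matrices, offering only a parenthetical hand-wave about ``an explicit integer-matrix realization of the doubling relation.'' But this membership claim is exactly the hard part of the theorem --- whether \emph{any} 1NBHVA with integer entries recognizes $\mathtt{POW_r}$ was an open question that this theorem answers --- and it does not follow from anything you proved: your inclusion half only yields $\mathsf{CF}\subseteq \mathfrak{L}(\textup{1NBHVA(2)}_{\mathbb{Z}^{2\times 2}})$, and $\mathtt{POW_r}\notin\mathsf{CF}$. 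The natural integer-matrix encoding (track $|w|_a$ additively in one entry and $2^{|w|_b}$ multiplicatively in the other, giving the vector $(i+1~~2^j)$ after $a^ib^j$) never returns the vector to its initial value on its own; the comparison $i+1-2^j=1$ requires a final postprocessing multiplication, and homing vector automata by definition read no end-marker. The paper therefore builds a $\textup{DBHVA}_{\dollar}\textup{(2)}_{\mathbb{Z}^{2\times 2}}$ with matrices $A_a$, $A_b$, $A_{\dollar}$ and initial vector $(1~~1)$, and then invokes its end-marker-elimination lemma (Lemma~1), which costs extra states and, crucially, \emph{nondeterminism}: the machine must guess the last input symbol and apply $A_\sigma A_{\dollar}$ in one step, producing an $\textup{NBHVA(2)}_{\mathbb{Z}^{2\times 2}}$, not a deterministic one. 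Without either that lemma or an explicit end-marker-free construction, your membership claim for $\mathtt{POW_r}$ is unsupported, so the strictness direction is incomplete; the non-context-freeness of $\mathtt{POW_r}$ (and hence the separation, once membership is established) is the only part of that half that stands.
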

\begin{proof} In \cite{SSa18}, it is proven that any $ \mathbb{Z}^{2 \times 2} $-automaton can be converted to a $ \mathbf{F}_2 $-automaton. By setting the initial vector to be $ (1~~0) $, any $ \mathbf{F}_2 $-automaton can be simulated by a 1NBHVA(2) whose matrices have determinant 1 with integer entries \cite{SSD16} and the inclusion follows.

	Now we are going to prove that the inclusion is proper. Since $ \mathbf{F}_2 $-automata recognize exactly the class of context-free languages \cite{DM00,Co05}, $ \mathfrak{L}(\mathbb{Z}^{2 \times 2})= \mathsf{CF}$. Let us construct a DBHVA$ _\dollar $(2)$ _{\mathbb{Z}^{2 \times 2}} $ $ V $ recognizing the non-context-free language $ \mathtt{POW_r}=\{a^{2^n}b^n | n \geq 0\} $. The state diagram of $ V $ is given in Figure \ref{fig: machine}.

	\begin{figure}\caption{State diagram of $ V $ accepting the language $\mathtt{POW_r}=\{a^{2^n}b^n | n \geq 0\} $. }\label{fig: machine}
		\centering
		
		\includegraphics[width=0.5\linewidth]{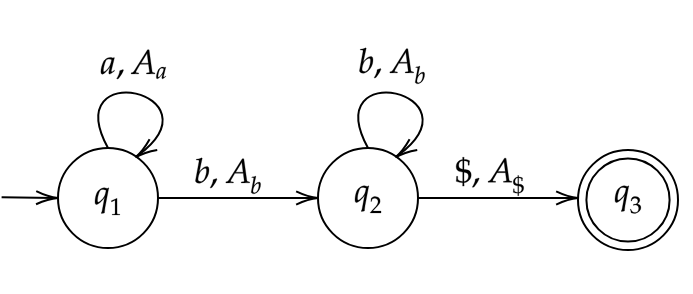}
	
	\end{figure}
	\[ 
	A_a=\mymatrix{ cc }{
		1 & 0 \\
		1 & 1
	}  ~~
	A_b=\mymatrix{ cc }{
		1 & 0 \\
		0 & 2
	}~~
	A_{\dollar}=\mymatrix{ rr }{
		1 & 1 \\
		-1 & -1
	} 
	\]
	The initial vector of $ V $ is $ v_0=(1~~1) $. While reading $ a $ in $ q_1 $, $ V  $ multiplies its vector with the matrix $ A_a $. It moves to $ q_2 $ when it scans the first $ b $ and multiplies its vector with the matrix $ A_b $ as it reads each $ b $. When $ V $ reads the end-marker, it multiplies its vector with the matrix $ A_{\dollar} $ and moves to $ q_3 $.

	When the vector is multiplied by $ A_a $, the first entry of the vector is increased by 1 and when the vector is multiplied by $ A_b $, the second entry of the vector is multiplied by 2. Hence, after reading an input string of the form $ a^ib^j $, the vector is equal to $ (i+1~~2^j) $, as a result of the multiplication by the matrix product $ {A_a}^i{A_b}^j  $. After multiplication by $ A_{\dollar} $, the second entry of the vector is subtracted from the first entry and this value is stored in both entries of the vector. The value of the final vector is equal to $ (1~~1) $ iff $  i+1-2^j=1 $. Hence, the accepted strings are those which are of the form $ a^{2^j}b^{j} $. As it is proven in Lemma \ref{thm: NBHVA-endmarker} that a NBHVA without end-marker recognizing the same language as a given NBHVA with end-marker can be constructed just by increasing the number of states but not the vector size, we can conclude that $ \mathtt{POW_r} $ can be recognized by a $ \textup{NBHVA(2)} _{\mathbb{Z}^{2 \times 2}} $. 

\end{proof}

\section{The string separation problem}
\label{sec:distinguish}

In this section we investigate the string separation problem for vector and homing vector automata. Recently, the same question has been investigated in \cite{BMY16,BMY17} for different models such as probabilistic, quantum, and affine finite automata (respectively, PFA, QFA, AfA). (We refer the reader to \cite{SayY14,DCY16A} for details of these models.) As these models are capable of storing information in their probabilities or amplitudes, they can be much more state-efficient than DFAs. For example, nondeterministic QFAs and zero-error AfAs can distinguish any pair by using only two states. 

The storage mechanism of vector automata and homing vector automata enables these two models to separate strings by encoding the string in their vector entries. When measuring efficiency of these two models, it is fairer to measure succinctness in terms of both the number of states and the vector size.\footnote{In \cite{SYS13}, the size of an $ n $-DBVA($ k $) is defined as $ nk $, and a hierarchy theorem is given based on the size of DBVAs.} 

We first show that VAs can simulate their classical states by using longer vectors and conclude that their stateless counterparts have the same computational power. 

\begin{lemma}
	\label{lem:1-state-VA}
	Any given $n$$\textup{-DVA($k$)}$ $ V $ can be simulated by a $\textup{0-DVA($nk$+1)}$, which recognizes the same language. 
\end{lemma}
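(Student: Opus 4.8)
The plan is to encode the current state of $V$ into the simulating machine's vector through a one-hot block structure, and to reserve one extra coordinate as a dedicated ``status'' entry, so that the stateless machine---which may only ever inspect its first coordinate---can nonetheless recover the information that $V$ reads off the first entry of its own $k$-dimensional vector. Concretely, let $Q=\{q_1,\dots,q_n\}$ and let $v_0$ be the initial vector of $V$. I would represent a configuration of $V$ that is in state $q_i$ with current vector $v\in\mathbb{Q}^k$ by the $(nk+1)$-dimensional row vector $(\,v[1] \mid e_i\otimes v\,)$, where $e_i$ is the $i$-th standard basis row vector of $\mathbb{Q}^n$; thus coordinates $2,\dots,nk+1$ hold $v$ in their $i$-th length-$k$ block and zeros elsewhere, while the leading coordinate is the status entry, maintained so as to equal $v[1]$. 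The key observation is that, since exactly one block of $e_i\otimes v$ is nonzero, $v[1]$ equals the sum of the ``first-of-each-block'' coordinates, and this sum is a quantity a matrix can recompute into a single coordinate.

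For each pair $(\sigma,\tau)\in\Sigma\times\{=,\neq\}$ I would build an $nk\times nk$ block matrix $B_{\sigma,\tau}$ whose $i$-th block-row contains the matrix $A$ in block-column $j$ and zeros in all other block-columns, where $\delta(q_i,\sigma,\tau)=(q_j,A)$. A direct block computation gives $(e_i\otimes v)\,B_{\sigma,\tau}=e_j\otimes(vA)$, so this single matrix simultaneously performs the state change and the vector multiplication for \emph{every} state at once, which is precisely what allows the states to be dropped. I then lift $B_{\sigma,\tau}$ to the $(nk+1)\times(nk+1)$ matrix
\[
C_{\sigma,\tau} = \mymatrix{c|c}{0 & 0 \\ \hline B_{\sigma,\tau}\mathbf{u} & B_{\sigma,\tau}},
\]
where $\mathbf{u}$ is the column vector (of length $nk$) carrying a $1$ in each position $(l-1)k+1$, for $l=1,\dots,n$, and $0$ elsewhere. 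Multiplying $(\,v[1]\mid e_i\otimes v\,)$ by $C_{\sigma,\tau}$ yields $(\,(vA)[1]\mid e_j\otimes(vA)\,)$: the leading $0$ makes the new status independent of the old one, the block part updates via $B_{\sigma,\tau}$, and the column $B_{\sigma,\tau}\mathbf{u}$ recomputes the new status as the sum of the first entries of all blocks of $e_j\otimes(vA)$, which equals $(vA)[1]$. The initial vector of the stateless machine is $(\,v_0[1]\mid e_1\otimes v_0\,)$, and its single (accepting) state uses the matrices $C_{\sigma,\tau}$; determinism is immediate, since there is one matrix per $(\sigma,\tau)$.

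The only genuinely new ingredient, and the step I expect to require the most care, is matching the acceptance condition, because the stateless machine has lost the ability to test ``is the final state accepting?''. I would fold this test into the end-marker matrices: for $\sigma=\dollar$ I use the same construction but replace $\mathbf{u}$ by $\mathbf{u}'$, the vector that retains a $1$ in position $(l-1)k+1$ only for those $l$ with $q_l\in Q_a$, and $0$ elsewhere. Then, after the $\dollar$-transition lands in block $j$, the new status entry equals $(vA)[1]$ when $q_j\in Q_a$ and equals $0$ otherwise; hence it is $1$ exactly when $V$ would both finish in an accept state and have first entry $1$. Finally, a straightforward induction on the length of the processed prefix shows that the stateless machine's vector is always $(\,v[1]\mid e_i\otimes v\,)$ for the corresponding configuration $(q_i,v)$ of $V$ (the base case being the initial vectors, and the inductive step being the computation above, noting that the status entry correctly drives the selection of $\tau$ at each step). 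Since both machines accept iff their first entry is $1$ after reading $\dollar$, they recognize the same language, and the simulating machine uses a vector of dimension $nk+1$ as claimed.
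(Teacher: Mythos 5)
Your proposal is correct and follows essentially the same route as the paper: a one-hot block encoding of the current state, an extra leading coordinate that stores the first entry of $V$'s vector (recomputed at each step as the sum of the first-of-block entries), and a single block matrix per $(\sigma,\tau)$ pair that performs the state change and vector update simultaneously. The only cosmetic difference is in handling acceptance: the paper preprocesses $V$ so that the final transition into a non-accept state multiplies the vector by the zero matrix, whereas you mask the status-recomputing column of the $\dollar$-matrices to accept-state blocks --- the same idea, folded in at a different point.
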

\begin{proof}
	Note that the following modification to any DVA does not change the recognized language: If the last transition yields a non-accept state, then the vector is multiplied by the zero matrix, i.e. we guarantee that the first entry of the vector is set to 0 if the final state is non-accepting. Assume that $V$ is an $ n $-state $k$-dimensional DVA with this property.
	
	We describe a 0-DVA$(nk+1)$ $ V' $. We will call the vector of $V'$ `the new vector', and index its last $ nk $ entries by pairs as 
	\[
	\underbrace{(q_1,1), (q_1,2), \ldots, (q_1,k)}_{1st~block}, \underbrace{(q_2,1), (q_2,2), \ldots, (q_2,k)}_{2nd~block}, \ldots,\ldots, \underbrace{(q_n,1), (q_n,2), \ldots, (q_n,k)}_{n\mbox{-}th~block},
	\]
	where we have $ n $ blocks, and the $i$'th block is associated with $q_i \in Q$ ($1 \leq i \leq n$). Based on this representation, $V'$ will keep track of the state and the vector of $ V $ at each step. The remaining first entry of the new vector holds the sum of all of its entries at indices of the form $ (q,1) $, for any $q$. 
	
	During the simulation, all but one of the blocks will be forced to contain zeros in all of its entries. The only exception is the block corresponding to the current state of  $ V $, which will contain the values in $V$'s vector.
	Thus, the first entry of the new vector keeps the value of the first entry of $ V $'s vector for each step, and so, $V'$ can easily implement the vector test done by $ V $, i.e. testing whether the first entry of $V$'s vector is 1 or not.  
	
	The new vector is initially set to
	\[
	( v_0[1], v_0[1], v_0[2],\ldots,v_0[n],0,\ldots,0 ),
	\]
	where the first entry is set to $ v_0[1] $ and then the first block is set to $ v_0 $ since $ V $ starts its computation in $ q_1 $, and, all other blocks are set to zeros.
	
	Since $ V' $ is stateless, it has a single  $ (1+nk) \times (1+nk)$ matrix for every ($ \sigma $, $\tau$) pair.
	This big matrix is designed so that 
	each transition of the form $ \delta(q_i,\sigma,\tau) = (q_j,A) $ of $ V $
	is implemented by multiplying the $ i $'th block by $ A $ and transferring the result  to the $ j $'th block. Thus, both the simulated state and the vector of $ V $ are updated accordingly. Moreover, the first entry of the new vector is set to the summation mentioned above. In this way, the computation of $ V $ is simulated by $ V' $ and both of them make the same decision on each input, i.e. the first entry of the new vector is set to 1 if and only if the final state of $ V $ is accepting and the first entry of $ V $'s vector is 1.
\end{proof}

Stateless models are advantageous, as they allow leaving behind the details of the state transitions and focusing on the vector matrix multiplications. In the following series of results, we are going to show that stateless DBVAs can separate any pair of strings.

In \cite{SYS13}, it is proven that  $ \bigcup_k\mathfrak{L}$(DBVA($ k $)) = $S^=_{\mathbb{Q}}$. The proof involves construction of a 0-DBVA($ k $) to simulate a TuFA with $ k $ states. Therefore, any result about 0-DBVA($ k $)s is also true for TuFAs with $ k $ states and more generally for generalized finite automata with $ k $ states.

We start by distinguishing unary strings.

\begin{theorem}
	For any  $ i \geq 0 $, there exists a \textup{0-DBVA(1)} that distinguishes the string $ a^i $ from any other string in $\{a^*\}$.
\end{theorem}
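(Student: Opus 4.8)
The plan is to exhibit a concrete stateless one-dimensional blind vector automaton whose single scalar ``vector'' traces out a geometric sequence as the input $ a^j $ is read, and then to use the end-marker transition to shift that sequence so it hits the accepting value $ 1 $ exactly at $ j = i $. A $ \textup{0-DBVA(1)} $ has only one (accepting) state and $ 1 \times 1 $ matrices, so its whole behavior is governed by three rational scalars: the initial value $ v_0 $, the multiplier $ r $ applied upon reading each $ a $, and the multiplier $ s $ applied upon reading the end-marker $ \dollar $. Processing the tape $ a^j \dollar $ therefore produces the final scalar $ v_0\, r^{j}\, s $, and by the blind acceptance condition the string $ a^j $ is accepted if and only if this value equals $ 1 $.

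First I would fix $ v_0 = 1 $ and $ r = 2 $, so that after reading $ a^j $ the scalar equals $ 2^{j} $. Then I would set the end-marker multiplier to $ s = 2^{-i} $, which is rational and hence admissible as a matrix entry. With these choices the final scalar on input $ a^j \dollar $ is $ 2^{j}\cdot 2^{-i} = 2^{\,j-i} $. The next step is simply to observe that $ 2^{\,j-i} = 1 $ holds if and only if $ j = i $, because $ t \mapsto 2^{t} $ is strictly monotone. Consequently the automaton accepts $ a^j $ precisely when $ j = i $; that is, it accepts $ a^i $ and rejects every other string in $ \{a^*\} $, which is exactly the required separation (indeed the single machine separates $ a^i $ from all other unary strings at once). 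The boundary case $ i = 0 $ is handled by the same formula: there $ s = 1 $, and the machine accepts $ \varepsilon $ and nothing else.

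There is essentially no obstacle here. The only point needing any care is the injectivity of the geometric sequence $ j \mapsto 2^{\,j-i} $, which is immediate once $ r $ is chosen to be a fixed rational away from $ \pm 1 $. The construction is deliberately minimal and serves mainly as a warm-up: the same geometric-encoding idea, iterated across several entries of a size-two vector, is what will drive the stronger separation results for non-unary alphabets in the theorems that follow.
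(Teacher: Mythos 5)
Your proof is correct and takes essentially the same approach as the paper: both encode the target length via a power of $2$ and exploit the injectivity of $j \mapsto 2^{j}$, the only cosmetic difference being that the paper starts the vector at $(2^i)$ and multiplies by $\bigl(\frac{1}{2}\bigr)$ per symbol $a$ with end-marker multiplier $(1)$, whereas you start at $(1)$, multiply by $(2)$ per $a$, and place the constant $\bigl(2^{-i}\bigr)$ on the end-marker transition. Either placement of the constant yields the same final value $2^{\,j-i}$, so the two constructions are interchangeable.
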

\begin{proof}
	The initial vector is $ v_0=(2^i) $. For each symbol $a$, the vector is multiplied with $\bigl ( \frac{1}{2} \bigr) $. For any other input symbol, the vector is multiplied by (0). The end-marker is associated with multiplication by (1). Therefore, the final vector is $ (1) $ if and only if the automaton reads $ a^i $.
\end{proof}

\begin{corollary}
	The language $ \mathtt{L}_\varepsilon $ is recognized by a \textup{0-DBHVA(1)}
	.
\end{corollary}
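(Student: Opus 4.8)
The plan is to exhibit a single stateless, blind, one-dimensional homing machine and to verify directly that it accepts only the empty string. Recall that a \textup{0-DBHVA(1)} has one (accepting) state, a scalar initial value $v_0$, a scalar multiplier $c_\sigma$ attached to each input symbol $\sigma \in \Sigma$, reads no end-marker, and accepts a word exactly when its running value returns to $v_0$ at the end of the input. Hence the value after reading $w$ is $v_0 \prod_{i=1}^{|w|} c_{w[i]}$, and, since $Q$ contains only the accepting state, acceptance reduces to the condition $\prod_{i=1}^{|w|} c_{w[i]} = 1$ when we take $v_0=(1)$.

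First I would take $v_0 = (1)$ and set the multiplier of every input symbol to $c_\sigma = (2)$; equivalently one may reuse the $i=0$ instance of the preceding theorem, with $c_a = \bigl(\tfrac{1}{2}\bigr)$ and $c_\sigma = (0)$ for $\sigma \neq a$. For the empty input there are no multiplications, so the value stays at $v_0 = (1)$ and the homing test succeeds, so $\varepsilon$ is accepted. For any nonempty $w$ the accumulated product is $2^{|w|}$ (in the uniform choice), which differs from $1$, so the homing test fails and every nonempty word is rejected. Therefore the machine recognizes exactly $\mathtt{L}_\varepsilon$.

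I expect essentially no obstacle here, since this is just the $i=0$ case of the theorem re-read as a homing vector automaton: with $v_0 = (1)$ the homing condition ``current vector equals $v_0$'' coincides with the vector automaton's test ``first entry equals $1$'', and the vector automaton's end-marker step (multiplication by $(1)$) is an identity that the homing model simply omits. The only point needing a line of care is confirming that no nonempty word can pass the equality test, and with the uniform multiplier $c_\sigma=(2)$ this is immediate, as the value after a word of length $n$ is $2^n$, which equals $1$ only when $n=0$.
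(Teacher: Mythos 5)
Your proof is correct and matches the paper's approach: the paper likewise obtains the \textup{0-DBHVA(1)} by stripping the end-marker transition from the $i=0$ machine of the preceding theorem, which is exactly the alternative you note, and your uniform-multiplier variant $c_\sigma=(2)$ is a cosmetic simplification of the same idea. Nothing is missing.
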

\begin{proof}
	If one removes the transition associated with the end-marker from the machine described in the proof above for $i=0$, one obtains a \textup{0-DBHVA(1)} recognizing $ \mathtt L_\varepsilon $.
\end{proof}

For any $ x \in \{1,2\}^+ $, let $ e(x) $ be the number encoded by $ x $ in base 3:
\[
e(x) = 3^{|x|-1} x[1] + 3^{|x|-2} x[2] + \cdots + 3^1 x[|x|-1] + 3^0 x[|x|]  .
\]

The encoding $e(x)$ can be easily obtained by using vector-matrix multiplications. Starting with the initial vector $v_0= (1~~~0 )$, and multiplying the vector with 
\[
A_1 = \mymatrix{rr}{ 1 & ~ 1\\ 0 & ~ 3 }
\mbox{ and }
A_2 = \mymatrix{rr}{ 1 & ~ 2 \\ 0 & ~ 3},
\]
respectively
for each scanned $1$ and $2$ from left to right, one obtains $(1~~~e(x) )$ at the end.
We can easily extend this encoding for any generic alphabet. For any $ x \in \{1,2,\ldots,m-1\}^+ $ for some $m>3$, $e_m(x)$ is the number encoded by $ x $ in base $m$:
\[
e_m(x) = m^{|x|-1} x[1] + m^{|x|-2} x[2] + \cdots + m^1 x[|x|-1] + m^0 x[|x|]  .
\]
To encode $e_m(x)$, $v_0$ is multiplied with 
\[
A_i = \mymatrix{rr}{ 1 & ~ i \\ 0 & ~ m },
\]
for each symbol $i$. Note that the results proven in this section for binary alphabets can be easily extended to any alphabet using the encoding $ e_m(\cdot ) $.

Now we show that 2-dimensional vector automata can distinguish strings without using any states.

\begin{theorem}
	\label{thm: VA-dist-x}
	The string $ x \in \{1,2\}^+ $ is distinguished from any other string in $\{1,2\}^*$ by a \textup{0-DBVA(2)}.
\end{theorem}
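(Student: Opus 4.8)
The plan is to reuse the base-3 encoding $e(\cdot)$ introduced just before the statement and to reduce the entire task to a single linear test carried out by the end-marker matrix. Starting from the initial vector $v_0 = (1~~0)$ and applying $A_1$ for each scanned $1$ and $A_2$ for each scanned $2$, the automaton accumulates the value $e(w)$ in the second entry, so that after reading any $w \in \{1,2\}^+$ the vector equals $(1~~e(w))$. Since $A_1$ and $A_2$ already realize this update without reference to any control state, the construction is automatically stateless; the only remaining design choice is the matrix $A_\dollar$ associated with the end-marker.

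The key observation is that $e$ is injective on $\{1,2\}^+$: the digits are drawn from $\{1,2\}$, so strings of a fixed length $n$ have pairwise distinct base-3 representations, while strings of different lengths occupy disjoint value ranges (the smallest length-$(n+1)$ value $(3^{n+1}-1)/2$ exceeds the largest length-$n$ value $3^n-1$, since this reduces to $3^n > -1$). Hence, for $w \in \{1,2\}^+$, we have $e(w) = e(x)$ if and only if $w = x$.

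I would then choose $A_\dollar$ so that the first entry of $(1~~e(w))\,A_\dollar$ equals the affine expression $1 + e(x) - e(w)$; concretely, taking the first column of $A_\dollar$ to be $(1+e(x)~~-1)^{\top}$ and the second column arbitrary (say zero) achieves exactly this. This first entry equals $1$ precisely when $e(w) = e(x)$, i.e.\ exactly when $w = x$. Because a blind vector automaton inspects only the first entry after processing the end-marker, this yields acceptance of $x$ and rejection of every other string of $\{1,2\}^+$. The empty string is handled for free: on input $\varepsilon$ the vector is still $(1~~0)$, so $A_\dollar$ produces first entry $1 + e(x) \geq 2 \neq 1$, since $x \in \{1,2\}^+$ forces $e(x) \geq 1$.

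There is no deep obstacle here; the construction becomes routine once the encoding is in place. The one point that genuinely needs care is combining the injectivity of $e$ with the fact that the affine map $t \mapsto 1 + e(x) - t$ attains the target value $1$ at the \emph{single} point $t = e(x)$. It is precisely this combination that guarantees the single linear end-marker test separates $x$ from \emph{all} of $\{1,2\}^* \setminus \{x\}$, rather than merely from the strings of the same length as $x$.
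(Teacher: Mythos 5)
Your proof is correct, and every step checks out: the forward matrices do accumulate $(1~~e(w))$ in the vector, your end-marker matrix puts $1+e(x)-e(w)$ in the first entry, your injectivity argument for $e$ on $\{1,2\}^+$ (same-length uniqueness plus disjoint value ranges across lengths) is sound, and the empty string is handled. However, your construction differs from the paper's in a meaningful way. The paper stores the target in the \emph{initial} vector, $v_0=(1~~e(x^r))$, and reads the input $y$ through the \emph{inverse} matrices $A_1^{-1},A_2^{-1}$, so that the second entry tracks the running difference $e(x^r)-e(y^r)$ (up to a harmless scaling factor $3^{-|y|}$ that the paper's final display drops); the end-marker matrix then merely transfers this difference into the first entry while resetting the second entry to $e(x^r)$. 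You instead keep the initial vector fixed at $(1~~0)$, encode the input forward, and confine all dependence on $x$ to the end-marker matrix, where the whole comparison happens in one affine test. Both routes hinge on the injectivity of $e$, which you prove explicitly and the paper tacitly assumes --- a point in your favor, as is the fact that your matrices are integer-valued and involve no inverses. What the paper's shape buys is reusability in the results that follow it: the finite-language machine of Theorem \ref{thm: VA-fin} tensors together copies that end in $(e(x_i^r)-e(y^r)~~0)$, and the two-state homing automaton of Theorem \ref{thm:HVA-dist-sb} has no end-marker at all, so the comparison must be performed incrementally during the scan exactly as the paper's machine does; your end-marker-deferred test cannot transfer to that setting, since there is no final symbol on which to perform it.
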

\begin{proof}
	We build the machine in question. The initial vector  is
	\[
	v_0 = (1 ~~~ e(x^r)).
	\]
	
	The matrices applied to the vectors for symbols $ 1 $ and $ 2 $ are 
	\[
	A_1^{-1} = \mymatrix{rr}{ 1 & ~ -\frac{1}{3} \\ 0 & ~ \frac{1}{3} }
	\mbox{ and }
	A_2^{-1} = \mymatrix{rr}{ 1 & ~ -\frac{2}{3} \\ 0 & ~ \frac{1}{3} },
	\]
	respectively. When applied on the vector, the matrix $ A_1^{-1} $ (resp., $ A_2^{-1} $) leaves the first entry unchanged at value 1, and subtracts 1 (resp., 2) from the second entry and then divides the result of this subtraction by 3.

	Let $ y \in \{1,2\}^* $ be the given input. Suppose that $y \neq \varepsilon$. After reading $ y $, the vector changes as
	\[
	( 1 ~~~ e(x^r))
	\rightarrow
	\mypar{1 ~~~ \frac{e(x^r) - y[1]}{3}}
	\rightarrow
	\mypar{ 1 ~~~ \frac{e(x^r) - y[1]-3y[2]}{3^2} }
	\]
	\[
	\rightarrow
	\mypar{ 1 ~~~ \frac{e(x^r) - 3^0y[1]-3^1y[2]-3^2y[3]}{3^3} }     
	\]
	for the first three steps, and, at the $ i $'th step, we have
	\[
	\mypar{ 1 ~~~ \frac{e(x^r) - \sum_{j=1}^i 3^{j-1}y[j]}{3^i} }
	\rightarrow
	\mypar{ 1 ~~~ \frac{ \frac{e(x^r) - \sum_{j=1}^i 3^{j-1}y[j]} {3^i} - y[i+1]}{3} },
	\]
	which is equal to
	\[
	\mypar{ 1 ~~~ \frac{e(x^r) - \sum_{j=1}^{i+1} 3^{j-1}y[j]}{3^{i+1}} }.
	\]
	Thus, after reading $ y $, the vector is equal to
	\[
	\mypar{1 ~~~ e(x^r) - e(y^r)}.
	\]
	After reading the end-marker, the final vector is set as
	\[
	v_f = \mypar{1 + e(x^r) - e(y^r) ~~~ e(x^r)}
	\]
	by using the matrix
	\[
	A_{\$}=\mymatrix{cc}{ 1 & ~e(x^r) \\ 1 & 0 }.
	\]
	
	Thus, if $ x=y $, then $ v_f = v_0 $ and the first entry is 1. Otherwise, the first entry of $ v_f $ is not equal to 1.
	
	If $y=\varepsilon$, then the final vector is obtained by multiplying the initial vector with $A_{\$}$ upon reading the end-marker: $$v_f=\mypar{1 + e(x^r) ~~~ e(x^r)}.$$
	Since the first entry of the vector is never equal to 1, $\varepsilon$ is never accepted.
	
\end{proof}

We now focus on recognizing finite languages containing multiple strings. Stateless VAs can accomplish this task while no stateless HVA can recognize any finite language as we show in Section \ref{sec: stateless}.

\begin{theorem}\label{thm: VA-fin}
	Any finite language $ X =\{ x_1,\ldots,x_k \}  \subseteq \{1,2\}^+ $
	can be recognized by a \textup{0-DBVA($2^{k}+1$)}.
\end{theorem}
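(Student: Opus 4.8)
The plan is to run, in parallel, $k$ copies of the single-string detector built in the proof of Theorem~\ref{thm: VA-dist-x}, one per string $x_i$, and to fuse them through a tensor product so that membership in $X$ collapses to a single scalar test. The starting observation is that, because the base-$3$ encoding $e(\cdot)$ is injective on $\{1,2\}^+$, an input $y\in\{1,2\}^*$ lies in $X$ if and only if $\prod_{i=1}^{k}\bigl(e(x_i^r)-e(y^r)\bigr)=0$: the $i$-th factor vanishes exactly when $y=x_i$. Thus I only need a stateless blind machine that computes this product of $k$ affine functions of $e(y^r)$ and reports whether it is zero.

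Recall that the $i$-th detector of Theorem~\ref{thm: VA-dist-x} maintains, after reading $y$, the $2$-dimensional vector $\bigl(1,\ e(x_i^r)-e(y^r)\bigr)$ using the matrices $A_1^{-1}$ and $A_2^{-1}$, and that its first coordinate stays at $1$ throughout. I would take the vector of $V'$ to consist of one dedicated output coordinate together with the $2^{k}$-dimensional tensor product of these $k$ detector vectors, for a total dimension $2^{k}+1$. The initial tensor block is $\bigotimes_{i=1}^{k}\bigl(1,\ e(x_i^r)\bigr)$. On each input symbol $\sigma\in\{1,2\}$, the single (stateless) matrix acts on the tensor block by the Kronecker power $(A_\sigma^{-1})^{\otimes k}$; by the mixed-product property $(v_1\otimes\cdots\otimes v_k)(M^{\otimes k})=(v_1 M)\otimes\cdots\otimes(v_k M)$, this updates the block to $\bigotimes_{i=1}^{k}\bigl(1,\ e(x_i^r)-e(y^r)\bigr)$. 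In this tensor, the coordinate selecting the first component of every factor stays constantly $1$, while the coordinate selecting the second component of every factor equals $\prod_{i=1}^{k}\bigl(e(x_i^r)-e(y^r)\bigr)$.

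Finally, the end-marker matrix writes into the output coordinate the sum of these two distinguished tensor coordinates, namely $1+\prod_{i=1}^{k}\bigl(e(x_i^r)-e(y^r)\bigr)$, which equals $1$ precisely when $y\in X$; for $y=\varepsilon$ the product is $\prod_i e(x_i^r)\neq 0$ (each $x_i$ is nonempty, so each $e(x_i^r)\ge 1$), so the empty string is rejected as required. All matrices and the initial vector are rational, and the machine is stateless and blind, so it is a genuine $0$-DBVA($2^{k}+1$). I expect the main effort to be bookkeeping: indexing the $2^k$ tensor coordinates by subsets of $\{1,\dots,k\}$, verifying via the mixed-product property that reading a symbol applies $A_\sigma^{-1}$ inside each factor simultaneously, and confirming that the empty-subset coordinate is invariant while the full-subset coordinate carries exactly the desired product. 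No single step is hard; the only conceptual point is recognizing that a \emph{product}, rather than a linear combination, of the $k$ detector values is what encodes the disjunction ``$y$ equals some $x_i$'', and that tensoring is precisely the device that lets a linear matrix machine form this product, at the cost of an exponential blow-up in the vector size.
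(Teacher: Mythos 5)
Your proposal is correct and follows essentially the same route as the paper: tensoring the $k$ two-dimensional detectors of Theorem~\ref{thm: VA-dist-x}, reading the product $\prod_{i=1}^{k}\bigl(e(x_i^r)-e(y^r)\bigr)$ off the all-second-components coordinate of the tensor block, and writing $1+\prod_{i=1}^{k}\bigl(e(x_i^r)-e(y^r)\bigr)$ into the first entry upon the end-marker. The only cosmetic difference is that the paper's end-marker step also resets the remaining $2^k$ entries to their initial values (so its $V_X$ is additionally homing with end-marker, a property remarked on for the subsequent corollary), whereas you only fix the first coordinate, which suffices for the theorem as stated.
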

\begin{proof}
	We know from  the proof of Theorem \ref{thm: VA-dist-x} 
	that for any $ x_i \in X $, there is a 0-DBVA(2) $ V_{x_i} $ that starts in $ (1~~~e(x_i^r)) $ and ends in $ ( 1+e(x_i^r)-e(y^r) ~~~ e(x_i) ) $ for the input string $ y $. Let $ V'_{x_i} $ be a  0-DBVA(2) obtained by modifying  $ V_{x_i} $ such that the   vector equals $ (e(x_i^r)-e(y^r) ~~~ 0) $ at the end. 
	We build a \textup{0-DBVA($2^{k}+1$)} $ V_X $ that
	executes each $ V'_{x_i} $ in parallel (by employing a tensor product of  all those machines) in the last $ 2^k $ entries of its vector, and then performs an additional postprocessing task. The initial vector of $ V_X $ is 
	\[
	v_0=(1~~~\otimes_{i=1}^k (1~~~e(x^r_i)) ),
	\]
	that is, each $ V'_{x_i} $ is initialized with its initial vector. In order to execute each $ V'_{x_i} $,  the  matrices 
	\[
	\mymatrix{c|c}{1 & 0 \cdots 0 \\ \hline 0 & \\ \vdots & \otimes_{i=1}^k A_1^{-1} \\ 0 &  }
	\mbox{ and }
	\mymatrix{c|c}{1 & 0 \cdots 0 \\ \hline 0 & \\ \vdots & \otimes_{i=1}^k A_2^{-1} \\ 0 &  }
	\]
	are respectively used for symbols  $ 1 $ and $ 2 $. Before reading $ \dollar $, the vector is equal to
	\[ 
	(1~~~\otimes_{i=1}^k (1~~~e(x^r_i)- e(y^r) ).
	\] The transition matrix used upon scanning the end-marker symbol $\dollar$ can be described as the product of two matrices $ A_{\dollar_1} $ and $ A_{\dollar_2} $:  
	
	\[ A_{\dollar_1} =
	\mymatrix{c|c}{1 & 0 \cdots 0 \\ \hline 0 & \\ \vdots & \otimes_{i=1}^k A_0 \\ 0   }
	\mbox{ and }
	A_{\dollar_2} =
	\mymatrix{c|c}{1 & 1~~~\otimes_{i=1}^k (1~~~e(x^r_i)) \\ \hline 1 & \\0 & \\ \vdots &   0\\ 0 &  }
	\]
	where 
	\[ A_{0} =
	\mymatrix{cc}{0 & 0 \\ 1 &  0  }.
	\]
	
	After multiplication with $ A_{\dollar_1} $, the vector becomes
	\[
	(1 ~~~\otimes_{i=1}^k(e(x_i^r)-e(y^r)~~~0 )),
	\]
	which is identical to
	\[
	\mypar{1 ~~~ \Pi_{i=1}^k(e(x_i^r)-e(y^r)) ~~~ 0 ~~~ \cdots ~~~ 0 }.
	\]
	If $ y \in X $, then the value of the second entry in this vector is also equal to 0 since one of the factors is zero in the multiplication. Otherwise, the value of the second entry is non-zero. $ A_{\dollar_2} $ in the product for $\dollar$  sets the final vector to
	%    \[
	%    	v_f = \mypar{1+\Pi_{i=1}^k(e(x_i^r)-e(y^r))~~~1~~~v_0[2]~~~v_0[3]~~~\cdots~~v_0[2^k]}.
	%    \]
	\[
	v_f = \mypar{1+\Pi_{i=1}^k(e(x_i^r)-e(y^r))~~~\otimes_{i=1}^k (1~~~e(x^r_i))}.
	\]

	Here,  the last $ 2^k-1 $ entries can be  set to those values easily, since $ v_0 $ is a fixed a vector. Thus, if $ y \in X $, then $ v_f = v_0 $ with $ v_f[1] = 1 $, and so the input is accepted. Otherwise, $ v_f \neq v_0 $ with $ v_f[1] \neq 1 $, and so the input is rejected.  
	
\end{proof}

Now we move on to string separation by homing vector automata. We show that any nonempty string can be distinguished from any other string by a 2-DBHVA($2$).

\begin{theorem}
	\label{thm:HVA-dist-sb}
	The string $ x \in \{1,2\}^+ $ is distinguished from any other string in $\{1,2\}^*$ by a \textup{2-DBHVA($2$)}.
\end{theorem}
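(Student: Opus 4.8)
The plan is to construct a stateless-in-spirit (here, two-state) blind homing vector automaton that uses matrix multiplication to ``encode'' the input string $x$ and then checks, at the very end, whether the scanned string $y$ matches $x$ by testing whether the accumulated vector has returned to its initial value $v_0$. The natural tool is a set of invertible $2\times 2$ rational matrices $A_1, A_2$ associated with the symbols $1$ and $2$, chosen so that the map $y \mapsto A_{y[1]} A_{y[2]} \cdots A_{y[|y|]}$ is injective on $\{1,2\}^*$. This injectivity is exactly what a free subgroup of the group of invertible matrices provides: if $A_1$ and $A_2$ generate a free group, then distinct words give distinct matrix products, so $A_x = A_y$ if and only if $x = y$.

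The construction I would use is as follows. Set the initial vector to $v_0 = (1~~0)$ and the only accept state to be a state $q_2$ distinct from the initial state $q_1$. While in $q_1$, reading the first symbol $y[1]$, the machine first multiplies by the fixed matrix $A_x^{-1} = A_{x[|x|]}^{-1}\cdots A_{x[1]}^{-1}$ and then by $A_{y[1]}$, simultaneously transitioning to $q_2$; thereafter, in $q_2$, it simply multiplies by $A_\sigma$ for each scanned symbol $\sigma$ and never changes state. Hence after reading a nonempty $y$ the vector is $v_f = v_0\, A_x^{-1}\, A_y$, where $A_y = A_{y[1]}\cdots A_{y[|y|]}$. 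If $A_1,A_2$ generate a free group so that $A_x^{-1}A_y$ equals the identity precisely when $x=y$, then $v_f = v_0$ exactly when $y = x$. The empty string $\varepsilon$ leaves the machine in the non-accepting state $q_1$, so it is correctly rejected, and the two-state, blind, homing, real-time constraints are all met.

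The two things I must pin down are, first, an explicit pair of integer or rational matrices generating a free group of rank two, and second, the guarantee that $v_0 A_x^{-1} A_y = v_0$ forces $A_x^{-1}A_y = I$ rather than merely fixing the single vector $v_0$. For the first point I would take the classical generators
\[
A_1 = \mymatrix{rr}{1 & 2 \\ 0 & 1}, \qquad A_2 = \mymatrix{rr}{1 & 0 \\ 2 & 1},
\]
which by the Sanov theorem generate a free group; this is essentially the Stern--Brocot style encoding already invoked in \cite{SSD16}. For the second point I would choose $v_0 = (1~~0)$ and verify that the orbit map $W \mapsto v_0 W$ is injective on the free semigroup generated by $A_1, A_2$ (equivalently that $v_0$ is not fixed by any nontrivial reduced word), so that equality of the image vectors already forces equality of the matrix products.

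The main obstacle is the second point: distinct matrices can still send $v_0$ to the same vector, so freeness of the group alone is not automatically enough---I need injectivity of the specific orbit of $v_0$, not just injectivity of the word-to-matrix map. I expect to handle this either by tracking the continued-fraction/Stern--Brocot structure of $v_0 A_y$ directly (showing the vector entries strictly encode the word $y$), or by appealing to the cited result in \cite{SSD16} that this encoding distinguishes strings. Once injectivity of $y \mapsto v_0 A_y$ is established, the rest is the routine verification described above; rationality of the entries and the real-time, two-state, blind structure are immediate from the construction.
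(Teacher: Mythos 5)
Your overall architecture is exactly the paper's: two states, a hardwired premultiplication by the inverse encoding of $x$ folded into the first transition, then one fixed matrix per scanned symbol, with acceptance iff the vector returns to $v_0$ (and $\varepsilon$ rejected because the accept state is never entered). The gap is the one you yourself flag as the ``main obstacle,'' and it is not a loose end but a fatal flaw in your concrete instantiation. With the Sanov generators you chose and $v_0 = (1~~0)$, orbit injectivity fails outright: the first row of $A_2 = \mymatrix{rr}{1 & 0 \\ 2 & 1}$ is $(1~~0)$, so $v_0 A_2 = v_0$, and hence $v_0$ is fixed by every power of $A_2$. Concretely, take $x = 2$: then $v_0 A_x^{-1} = v_0$, and reading $y = 2^k$ gives $v_f = v_0 A_2^{-1} A_2^{k} = v_0 A_2^{k-1} = v_0$ in the accept state, so your machine accepts every string in $2^+$, not just $x$. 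Freeness of the group generated by $A_1,A_2$ is beside the point, because the acceptance test only sees the orbit of $v_0$, and $v_0$ has a nontrivial stabilizer inside that free group. As written, your proposal reduces the theorem to an unproven orbit-injectivity claim whose specific witness is false.

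The paper closes exactly this gap by choosing matrices whose action on the orbit of $(1~~0)$ transparently computes a numeric encoding rather than relying on group freeness: with $A_1 = \mymatrix{rr}{1 & 1\\ 0 & 3}$ and $A_2 = \mymatrix{rr}{1 & 2\\ 0 & 3}$, one has $v_0 A_{x^r} = (1~~e(x^r))$, where $e(\cdot)$ is the base-$3$ value of a string over the digits $\{1,2\}$, and reading $y$ through the inverse matrices leaves the second entry equal to $e(x^r)-e(y^r)$ up to a nonzero factor $3^{-|y|}$. Since base-$3$ representations using only the digits $1$ and $2$ are unique (no two distinct strings, even of different lengths, share a value), the second entry vanishes iff $x = y$; injectivity of the orbit map becomes an elementary fact about positional notation. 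To salvage your route you would have to either switch to such encoding matrices, or actually prove that no element of the form $A_x^{-1}A_y$ with $x \neq y$ positive words stabilizes your chosen vector --- e.g.\ replace $v_0$ by $(1~~1)$ and carry out the Stern--Brocot argument you allude to --- but that verification is the entire mathematical content of the theorem, and your proposal leaves it undone.
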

\begin{proof}
	Let us construct a 2-DBHVA(2) distinguishing $x$. The initial state is named $ q_1 $, and the only accept state is named $ q_2 $.  We are going to use the encoding described above and the associated matrices. The proof is similar to the proof of Theorem \ref{thm: VA-dist-x}. The initial vector is $$ 	v_0=\mypar{1 ~~ 0}.$$   
	Let $ y \in \{1,2\}^* $ be the given input. If $ y = \varepsilon $, then the state is never changed, and so the input is never accepted. Therefore, we assume $ y \in \{1,2\}^+ $ in the remaining part. 
	
	To encode $x^r$, we will use the matrix $$ A_{x^r} = A_{x[|x|]}A_{x[|x|-1]}   \cdots A_{x[2]} A_{x[1]}. $$ 
	When in $ q_1 $, the vector is updated as $$ v_1=v_0 A_{x^r} A_{y[1]}^{-1} $$ upon reading the first symbol of $y$, and the state is set to $ q_2 $. The value of the vector is equal to $$\mypar{1~~~~ \frac{e(x^r)-y[1]}{3}}.$$ From now on, the state is never changed, but the vector is updated by multiplication with $ A_1^{-1} $ and $A_2^{-1}$ for each scanned symbol 1 and 2 respectively. Thus, as in Theorem \ref{thm: VA-dist-x}, after reading $y$ the vector is equal to 
	$$ v_f=(1 ~~e(x^r)-e(y^r)).$$
	
	Hence, we conclude that $v_f=v_0$ iff $x=y$.    
\end{proof}

It turns out that using two states in the above proof is essential, and we cannot obtain a similar result by using just one state, regardless of the dimension of the vector. More precisely, if $ x $ is accepted by a 0-NHVA, then $ xx $ is also accepted by the same automaton, since any further repetition of the same input naturally carries the vector through a trajectory that ends up in its initial value. We are therefore motivated to examine the limitations imposed by statelessness on homing vector automata in more detail in the next section.

The stateless vector automata we propose in Theorem \ref{thm: VA-fin} for recognizing a finite language require a vector of exponential size in the number of strings. Note that the machine $ V_X $ built in the proof of the theorem is deterministic and homing with end-marker. If we allow nondeterminism, then Lemma \ref{thm: NBHVA-endmarker} helps us obtain the same result with just two dimensions and using three states for ordinary NBHVAs.

\begin{corollary}
	Any	finite language $ X=\{x_1,\dots, x_k\} \subseteq \{1,2\}^+ $ can be recognized by a \textup{3-NBHVA(2)} $H_X$.
\end{corollary}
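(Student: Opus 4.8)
The plan is to use nondeterminism to collapse the exponential tensor-product construction of Theorem~\ref{thm: VA-fin} down to a two-dimensional vector, and then to remove the end-marker with Lemma~\ref{thm: NBHVA-endmarker}. The deterministic machine $V_X$ needs dimension $2^k+1$ precisely because it must track all $k$ candidate comparisons at once via a tensor product; a nondeterministic machine, by contrast, only ever has to pursue a single guessed candidate $x_i$, so one two-dimensional block is enough.

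Concretely, first I would construct a stateless machine with end-marker, a $\textup{0-NBHVA}_\$(2)$ $V$, recognizing $X$. Its initial vector is $v_0=(1~~0)$, and while scanning the input it applies the forward encoding matrices $A_1$ and $A_2$ introduced above, so that after reading any $y\in\{1,2\}^*$ the vector equals $(1~~e(y))$. All of the comparison work is deferred to the end-marker: on reading $\dollar$, $V$ nondeterministically guesses an index $i\in\{1,\dots,k\}$ and multiplies by $B_i=\mymatrix{rr}{1 & -e(x_i)\\ 0 & 1}$, producing $(1~~e(y)-e(x_i))$. Since the single state is accepting, the homing condition reduces to the final vector being $v_0=(1~~0)$.

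Next I would check correctness. If $y=x_i$ for some $i$, the branch that guesses this $i$ ends in $v_0$, so $y$ is accepted; if $y\notin X$, then for every guessed $i$ the second entry $e(y)-e(x_i)$ is nonzero, because $e(\cdot)$ is injective on $\{1,2\}^+$, so no branch accepts; and if $y=\varepsilon$, every branch yields $(1~~-e(x_i))\neq v_0$ (as $e(x_i)\geq 1$), correctly rejecting $\varepsilon\notin X$. The machine $V$ is real-time and stateless, hence a $(1)$-$\textup{NBHVA}_\$(2)$, so applying Lemma~\ref{thm: NBHVA-endmarker} yields a $(1+2)$-$\textup{NBHVA}(2)$ recognizing $X$ without the end-marker, which is the desired $H_X$.

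The crux is the design decision that keeps the machine simultaneously stateless and two-dimensional: the scanning phase must be a single fixed deterministic encoding, independent of $k$, with the entire per-string comparison pushed onto the one-shot end-marker step. This is exactly what evades the obstruction recorded after Theorem~\ref{thm:HVA-dist-sb} — that a stateless homing automaton reading no end-marker which accepts $x$ is forced to accept $xx$ — because here the decisive homing test is carried out only once, after the whole input has already been encoded into the vector.
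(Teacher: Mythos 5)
Your proof is correct, and it shares the paper's overall skeleton: construct a stateless two-dimensional nondeterministic blind HVA that uses the end-marker, then apply Lemma \ref{thm: NBHVA-endmarker} to remove the end-marker at the cost of extra states. The difference is in where the nondeterminism and the encoding live. The paper's machine $N_X$ guesses $x_i$ at the start and then ``executes'' the distinguisher of Theorem \ref{thm: VA-dist-x}, which keeps $e(x_i^r)$ in its \emph{initial vector} and decodes the input with the inverse matrices $A_1^{-1}$, $A_2^{-1}$; your machine instead performs one branch-independent forward encoding, carrying $(1~~e(y))$, and concentrates the entire guess in the end-marker step via the matrices $B_i$. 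Your variant is arguably tidier on one point that the paper glosses over: a stateless machine has a single initial vector, whereas the $k$ automata of Theorem \ref{thm: VA-dist-x} start from $k$ different vectors $(1~~e(x_i^r))$, so the paper's ``pick $x_i$ and execute $V_{x_i}$'' implicitly requires folding the setup of each candidate's initial vector into the first-symbol transition and the matching correction into that candidate's end-marker matrix; your construction needs no such repair, since all branches share the initial vector $(1~~0)$ and the scanning phase is deterministic. Both routes yield a $\textup{0-NBHVA}_{\dollar}(2)$ and hence, by the lemma, the claimed 3-NBHVA(2); as the paper notes parenthetically, two states in fact suffice here because $X \subseteq \{1,2\}^+$ never contains $\varepsilon$.
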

\begin{proof}
	We first construct a 0-$\textup{NBHVA}_\dollar(2)$ $N_X$. $N_X$ nondeterministically picks an $ x_i $ and then executes the deterministic automaton given in the proof of Theorem \ref{thm: VA-dist-x}. Since $N_X$ is homing using end-marker, we conclude that there exists a 
	3-NBHVA(2) $H_X$ recognizing $X$ by Lemma \ref{thm: NBHVA-endmarker}. (A 2-NBHVA(2) is in fact sufficient if $X$ does not contain the empty string, as can be seen in the proof of Lemma \ref{thm: NBHVA-endmarker}.)
\end{proof}

\section{Stateless homing vector automata}
\label{sec: stateless}

In this section, we investigate the computation power of stateless HVAs. In these machines, only the vector is updated after reading an input symbol.

\subsection{Remarks}
The limitation of having a single state for homing vector automata leads to the acceptance of the string  $ xx $, whenever the string $ x $ is accepted. This is true since further repetition of the same input naturally carries the vector through a trajectory that ends up in its initial value. Based on this observation, we can list the following consequences:

\begin{itemize}
	\item 		If string $ x $ is accepted by a \textup{0-NHVA} $H$, then any member of $ \{x\}^* $ is also accepted by $H$.
	
	\item 	If all members of a language $L $ are accepted by a \textup{0-NHVA} $H$, then any string in $ L^* $ is also accepted by $H$.
	
	\item 	If language $L $ is recognized by a \textup{0-NHVA}, then $ L = L^* $.
	
	\item 	\textup{0-NHVA}s cannot recognize any finite language except $ \mathtt{L}_\varepsilon $.
\end{itemize}

All of the  results given above are also valid for deterministic or blind models. We can further make the following observation.

\begin{lemma}
	\label{lemma: diff}
	If the strings $ w_1 $ and $ w_1w_2 $ are accepted by a \textup{0-DHVA} $ H$, then the string $ w_2$ is also accepted by $ H $.
\end{lemma}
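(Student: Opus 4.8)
The plan is to exploit the feature that makes statelessness special here: in a \textup{0-DHVA}, the matrix applied at each step is determined solely by the scanned symbol and by the status bit recording whether the running vector currently equals the fixed initial vector $v_0$. Since there is a single state and the machine is deterministic, reading a string from a given vector produces a uniquely determined computation. I would first package this as a function $f(u,z)$ giving the vector reached by starting from vector $u$, reading the string $z$ symbol by symbol, and at each step multiplying by the matrix prescribed for the current symbol together with the status ``$=$'' or ``$\neq$'' of the running vector (compared against $v_0$). Acceptance of a string $z$ is then exactly the condition $f(v_0,z)=v_0$.

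The key structural observation I would record is the composition identity $f(v_0,w_1w_2)=f(f(v_0,w_1),w_2)$, which holds because reading the concatenation $w_1w_2$ from $v_0$ is nothing more than reading $w_1$ from $v_0$ and then continuing to read $w_2$ from whatever vector results; determinism guarantees there is no ambiguity in this continuation. I would then simply combine this with the two hypotheses. Since $w_1$ is accepted, $f(v_0,w_1)=v_0$, so the identity collapses to $f(v_0,w_1w_2)=f(v_0,w_2)$. Since $w_1w_2$ is accepted, the left-hand side equals $v_0$, and therefore $f(v_0,w_2)=v_0$, which is precisely the statement that $w_2$ is accepted by $H$.

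The point to emphasize --- rather than a genuine obstacle --- is \emph{why} the continuation reading $w_2$ inside $w_1w_2$ coincides with the fresh computation reading $w_2$ on its own. Both begin from the vector $v_0$: the former because $w_1$ homes the vector back to $v_0$, the latter by definition of the initial configuration. Because the status bit is always tested against the absolute reference $v_0$ (and not against the vector at the start of the current phase), the two computations see identical status values at every position, apply identical matrices, and hence trace identical trajectories. This is exactly where the single-state restriction and the homing-to-$v_0$ acceptance condition conspire; with auxiliary states, or with a test referenced to a moving vector, the two runs could diverge and the conclusion would fail.
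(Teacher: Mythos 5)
Your proof is correct and takes essentially the same approach as the paper: the paper's two-line argument is exactly your observation that acceptance of $w_1$ returns the vector to $v_0$, so the continuation on $w_2$ inside $w_1w_2$ coincides with a fresh run on $w_2$. Your function $f$ and the composition identity $f(v_0,w_1w_2)=f(f(v_0,w_1),w_2)$, together with the remark that the status test refers to the fixed $v_0$ rather than a moving reference, simply make explicit what the paper leaves implicit.
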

\begin{proof}
	After reading $ w_1 $, the value of the vector is equal to its initial value. Since $ w_1w_2 $ is also accepted by $ H $, reading $ w_2 $ results in acceptance when started with the initial vector.
	
\end{proof}

For the unary case we have the following.
%
%\begin{corollary}\label{cor: diff}
%	If the strings $ a^i $ and $ a^j $ ($ 1<i<j $) are accepted by a \textup{0-DHVA} $ H$, then the string $ a^{j-i} $ is also accepted by $ H $.
%\end{corollary}

\begin{theorem}\label{thm: gcd}
	If the strings $ a^i $ and $ a^j $ ($ 1<i<j $) are accepted by a \textup{0-DHVA} $ H$, then the string $ a^{\gcd(i,j)} $ is also accepted by $ H $.
\end{theorem}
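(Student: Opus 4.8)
The plan is to derive the statement from the subtractive form of the Euclidean algorithm, using Lemma~\ref{lemma: diff} as the single engine. First I would specialize Lemma~\ref{lemma: diff} to the unary alphabet: if $a^p$ and $a^q$ with $p<q$ are both accepted by $H$, then taking $w_1=a^p$ and $w_2=a^{q-p}$ (so that $w_1w_2=a^q$) shows that $a^{q-p}$ is also accepted. In other words, the set $S=\{\, n\geq 1 : a^n \text{ is accepted by } H \,\}$ is closed under subtracting the smaller of any two of its elements from the larger. This is exactly the property that powers the classical subtractive gcd computation.

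With that closure in hand, I would run the subtractive Euclidean algorithm on the pair $(i,j)$: maintain a pair of positive integers, both lying in $S$, and at each step replace the larger entry by its difference with the smaller. Initially $i,j\in S$ by hypothesis; the closure property keeps every pair produced along the way inside $S$; and since the sum of the two entries strictly decreases at each step, the process terminates precisely when the entries become equal, and their common value is $\gcd(i,j)$. Therefore $\gcd(i,j)\in S$, i.e.\ $a^{\gcd(i,j)}$ is accepted by $H$, which is the claim. (The hypothesis $1<i$ only rules out the trivial case $i=1$, where $\gcd(i,j)=1=i$ is accepted outright; the argument above covers the general situation uniformly.)

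The only things requiring care are the bookkeeping invariants of the Euclidean step: one must check that at each subtraction the smaller entry is strictly below the larger (so that the hypothesis $p<q$ of the unary form of Lemma~\ref{lemma: diff} is met) and that no entry ever drops below $1$; both hold automatically for the difference of two distinct positive integers. I do not expect a genuine obstacle here. The substantive work is entirely absorbed into Lemma~\ref{lemma: diff}, whose validity rests on the determinism of $H$ — once the vector returns to $v_0$ the machine sits in a configuration identical to its start, so the remaining computation restarts cleanly — and the theorem then reduces to the elementary observation that a set of positive integers closed under subtraction and containing $i$ and $j$ must contain $\gcd(i,j)$.
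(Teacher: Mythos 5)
Your proof is correct, but it takes a genuinely different route from the paper's. You use Lemma~\ref{lemma: diff} as the sole engine, iterated: its unary specialization shows that the set $S=\{\,n\geq 1 : a^n \text{ accepted by } H\,\}$ is closed under subtracting the smaller of two distinct elements from the larger, and you then run the subtractive Euclidean algorithm inside $S$, with termination guaranteed by the strictly decreasing sum and correctness by the invariant $\gcd(p,q)=\gcd(i,j)$. The paper instead makes a \emph{single} application of Lemma~\ref{lemma: diff}, powered by two extra ingredients: B\'ezout's identity $il_i+jl_j=\gcd(i,j)$ (with $l_i>0$ and $l_j\leq 0$, up to symmetry) and the star-closure observation for stateless HVAs (if $a^i$ is accepted, so is $a^{im}$ for every $m$, since repeating an accepted string drives the vector around the same returning trajectory); it then notes that $a^{j(-l_j)}$ and $a^{il_i}=a^{j(-l_j)}a^{\gcd(i,j)}$ are both accepted, so Lemma~\ref{lemma: diff} yields acceptance of $a^{\gcd(i,j)}$ at once. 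Your version buys self-containment: the only property of $H$ you invoke is Lemma~\ref{lemma: diff}, and you avoid the star-closure property entirely, at the modest cost of the iteration's termination and invariant bookkeeping. The paper's version is a one-step argument, but it leans on the additional closure property of stateless machines and on B\'ezout. Your closing remarks (that determinism is what makes Lemma~\ref{lemma: diff} sound, and that the hypothesis $1<i$ is not actually needed by the argument) are both accurate.
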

\begin{proof}
	It is well known that for any positive integers $ i,j $, there are two integers $ l_i $ and $ l_j $ such that $ i l_i + j l_j = \gcd(i,j) $. Assume that $ l_i $ is positive and $ l_j $ is non-positive. (The other case is symmetric.) Note that $ il_i \geq j(-l_j) $. The strings $ a^{j(-l_j)} $ and $ a^{i l_i} $ are accepted by $ H $. By Lemma \ref{lemma: diff}, the string  $ a^{i l_i-j(-l_j)} $,  which is $ a^{\gcd(i,j)} $, is also accepted by $ H $. 
	
\end{proof}

\begin{corollary}
	\label{cor: gcd}
	If the strings $ a^i $ and $ a^j $ ($ 1<i<j $) are accepted by a \textup{0-DHVA} $ H$ and $ \gcd(i,j)=1 $, then $ H $ recognizes $ a^* $.
\end{corollary}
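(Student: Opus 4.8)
The plan is to derive this corollary directly from Theorem~\ref{thm: gcd} together with the ``$L=L^*$'' structural observations collected at the start of Section~\ref{sec: stateless}. The key hypothesis is that $\gcd(i,j)=1$, so applying Theorem~\ref{thm: gcd} to the accepted strings $a^i$ and $a^j$ immediately yields that $a^{\gcd(i,j)}=a^1=a$ is accepted by $H$. Thus the single-letter string $a$ is in the language recognized by $H$.

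Once $a$ is accepted, I would invoke the second bullet point from the Remarks subsection: if a string $x$ is accepted by a \textup{0-NHVA} (and hence, since determinism and blindness are special cases, by our \textup{0-DHVA} $H$), then every member of $\{x\}^*$ is accepted. Taking $x=a$, this gives that every string in $\{a\}^*=\{a^n : n\geq 0\}=a^*$ is accepted by $H$. Since $H$ operates over the unary alphabet $\{a\}$, accepting all of $a^*$ means $H$ recognizes exactly $a^*$, which is the claim.

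The argument is essentially a two-line composition of a lemma and a structural remark, so there is no serious obstacle. The only point requiring a moment of care is the base case of the gcd argument: the hypothesis $1<i<j$ is needed precisely so that Theorem~\ref{thm: gcd} applies (it is stated for $1<i<j$), and the conclusion $a^{\gcd(i,j)}=a$ relies on $\gcd(i,j)=1$. One should also note that the empty string $\varepsilon=a^0$ is accepted trivially, since reading nothing leaves the vector at its initial value in an accept state; this covers $n=0$ and confirms that the recognized language is all of $a^*$ rather than $a^+$. With these remarks the proof is complete.
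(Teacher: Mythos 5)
Your proof is correct and follows exactly the route the paper intends: the paper states this corollary without proof as an immediate consequence of Theorem~\ref{thm: gcd} (giving acceptance of $a^{\gcd(i,j)}=a$) combined with the remark that a stateless HVA accepting $x$ accepts all of $\{x\}^*$, which is precisely your argument. Your additional observations about the empty string and the unary alphabet are sound and only make explicit what the paper leaves implicit.
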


Let us now investigate the case where the set of matrices is commutative.

Let $ L \in \Sigma^* $ be a language. The \textit{commutative closure} of $ L $ is defined as $ com(L)=\{x \in \Sigma^* | \phi(x) \in \phi(L) \} $. A language is \textit{commutative} if $ com(L)=L $.

\begin{theorem}\label{thm: comm}
If a language $L $ is recognized by a \textup{0-NBHVA} $ H $ with a commutative set of matrices, then $L$ is commutative.
\end{theorem}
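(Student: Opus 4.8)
The plan is to show that if $L$ is recognized by a stateless nondeterministic blind homing vector automaton $H$ whose matrices form a commutative set, then membership of a string depends only on its Parikh image, which is exactly the statement that $L$ is commutative. Recall that acceptance by a \textup{0-NBHVA} means: there is a computational path (a choice, at each step, of one matrix among those associated with the scanned symbol) such that the product of the chosen matrices, applied to $v_0$ from the right, returns $v_0$. Since the machine is stateless and blind, no state information or mid-computation vector test constrains the transitions, so the set of matrices available for a symbol $\sigma$ depends only on $\sigma$, not on position or history.

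First I would fix an accepted string $w \in L$ and an arbitrary permutation $w'$ of $w$, so that $\phi(w) = \phi(w')$, and argue that $w' \in L$; by symmetry this gives $com(L) \subseteq L$, and $L \subseteq com(L)$ is trivial. Since $w \in L$, there is an accepting path, i.e. a sequence of matrices $M_1, M_2, \ldots, M_{|w|}$ with $M_t$ drawn from the matrix set of the symbol $w[t]$, such that $v_0 M_1 M_2 \cdots M_{|w|} = v_0$. The key step is to build an accepting path for $w'$: for each position $s$ in $w'$, I assign the matrix $M_t$ that was used in $w$ for the \emph{same} occurrence of the symbol $w'[s]$ (matching up the occurrences of each letter between $w$ and $w'$, which is possible precisely because $\phi(w) = \phi(w')$). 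This matrix is a legal choice at position $s$ of $w'$ because it is associated with the symbol $w'[s]$, using that the available matrices depend only on the symbol. The product along the $w'$-path is thus a reordering of the factors $M_1, \ldots, M_{|w|}$, and since the matrix set is commutative the reordered product equals $M_1 M_2 \cdots M_{|w|}$, so $v_0 M_{\pi(1)} \cdots M_{\pi(|w|)} = v_0 M_1 \cdots M_{|w|} = v_0$. Hence $w'$ is accepted.

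The main obstacle to watch is the precise meaning of ``commutative set of matrices'': commutativity must be asserted for the whole pool of matrices appearing across all symbols, not merely within each symbol's set, since the reordering mixes matrices coming from different letters. I would state this explicitly and note that pairwise commutativity of the elements of the (finite) set of all matrices of $H$ suffices, because any permutation of a product of pairwise-commuting elements leaves the product unchanged. A second, minor point is the bookkeeping of the occurrence-matching: I would phrase the argument via a permutation $\pi$ of $\{1,\ldots,|w|\}$ that sends the $t$-th factor of the $w$-path to the position in $w'$ holding the corresponding occurrence of the same letter, so that $M_{\pi(1)} \cdots M_{\pi(|w|)}$ is literally the $w'$-path product and is a rearrangement of $M_1 \cdots M_{|w|}$. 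Finally, since the whole argument runs over the existence of a single accepting path, it applies verbatim to the deterministic and blind (non-nondeterministic) restrictions as well, though for this statement nondeterminism causes no extra difficulty.
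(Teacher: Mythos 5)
Your proposal is correct and follows essentially the same argument as the paper: take the matrices $M_1,\ldots,M_{|w|}$ of an accepting computation for $w$, reassign them to the permuted positions of $w'$ (legal because statelessness and blindness mean the available matrices depend only on the scanned symbol), and use commutativity to conclude the reordered product still maps $v_0$ to $v_0$. Your explicit remarks on the occurrence-matching permutation and on requiring commutativity across the matrices of \emph{all} symbols are just careful spellings-out of what the paper's terser proof leaves implicit.
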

\begin{proof}
	Let $w \in L  $ and suppose that the string $w=w_{[1]}w_{[2]}\cdots w_{[n]}$ is accepted by $ H $. Let $A_1 A_2\cdots A_n$ be the product of the matrices labeling the computation such that $$v_0 A_1 A_2 \cdots  A_n=v_0$$ where $v_0$ is the initial vector of $ H $. Since the matrices are commutative, then for any permutation $\tau$, $$A_1A_2 \cdots  A_n=A_{\tau(1)} A_{\tau(2)} \cdots  A_{\tau(n)}.$$
	This leads to the acceptance of the string $w'=w_{[\tau(1)]} w_{[\tau(2)]} \cdots w_{[\tau(n)]} $ since $$v_0A_{\tau(1)} A_{\tau(2)} \cdots A_{\tau(n)}=v_0.$$ Hence, if $w$ is accepted by $ H $, then any string obtained from $ w $ by permuting its letters is also accepted by $ H $. Any string $ x $ with $ \phi(x)=\phi(w) $ is in $ L $ and we conclude that $ L $ is commutative. 
	
\end{proof}

When the computation is not blind, then the class of languages recognized is no longer commutative. The language of balanced strings of brackets $ \dyck $ can be recognized by 0-DHVA(1) as follows. Starting with the initial vector $ (1) $, for each left bracket the vector is multiplied by $ (2) $. As long as the vector is not equal to (1), for each right bracket, the vector is multiplied by ($\frac{1}{2}$). If the vector is equal to $ (1) $ and the next symbol is a right bracket, then the vector is set to (0).

\begin{corollary}\label{cor: commreverse}
If a language $L $ is recognized by a \textup{0-NBHVA} $ H $ with a commutative set of matrices, then $L=L^r$.	
\end{corollary}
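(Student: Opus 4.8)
The plan is to derive Corollary \ref{cor: commreverse} directly from Theorem \ref{thm: comm}, since the statement is about the same class of machines (0-NBHVAs with a commutative set of matrices). First I would invoke Theorem \ref{thm: comm} to conclude that any language $L$ recognized by such a machine is commutative, i.e. $com(L)=L$. The key observation is that commutativity is a much stronger condition than closure under reversal: if $L=com(L)$, then membership of a string depends only on its Parikh image $\phi(w)$, and not on the order of its symbols at all.

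The central step is then to note that for any string $w$, its reverse $w^r$ has exactly the same Parikh image, namely $\phi(w^r)=\phi(w)$, because reversing a string merely permutes the positions of its symbols without changing how many times each symbol occurs. Thus if $w \in L$, then $\phi(w^r)=\phi(w)\in\phi(L)$, so $w^r \in com(L)=L$. This shows $L \subseteq L^r$ (reading the inclusion as: $w\in L \Rightarrow w^r \in L$). Applying the same argument to $w^r$ in place of $w$ and using $(w^r)^r=w$ gives the reverse inclusion, so $L=L^r$.

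I expect essentially no obstacle here, as the result is an immediate specialization of the previous theorem; the only thing to be careful about is the logical packaging. One clean way is: $w\in L^r \iff w^r\in L \iff \phi(w^r)\in\phi(L) \iff \phi(w)\in\phi(L) \iff w\in L$, where the middle equivalences use commutativity of $L$ (from Theorem \ref{thm: comm}) together with $\phi(w)=\phi(w^r)$. This string of equivalences establishes $L=L^r$ in a single line once commutativity is in hand.
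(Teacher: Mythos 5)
Your proof is correct and follows the same route as the paper's own argument: both derive the result as an immediate consequence of Theorem \ref{thm: comm}, using the fact that $w$ and $w^r$ have the same Parikh image so that commutativity of $L$ forces $w^r \in L$ whenever $w \in L$. Your version merely spells out the logical packaging (the chain of equivalences and the involution $(w^r)^r = w$) more explicitly than the paper does.
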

\begin{proof}
	Suppose that $w \in L$. Then it is clear that $w^r$ will be also accepted by $ H $ by Theorem \ref{thm: comm} and $w^r \in L$. Since for every string $ w $ it is true that  $w \in L $ if and only if $ w^r \in L$, we conclude that $L=L^r$.
	
\end{proof}

We now focus on stateless HVAs whose vectors have dimension 1 and demonstrate some results on stateless FAMWs. Note that stateless FAMs do not process the end-marker  $\dollar$ by definition, since their single state is also an accept state and the computation ends once $\dollar$ is scanned in an accept state. 

We start by comparing the class of languages recognized by stateless versions of deterministic and blind finite automata with multiplication, and 1-dimensional homing vector automata. The capability of multiplication with negative rational numbers brings additional power to the stateless DBHVA(1)s. 

\begin{theorem}
	$ \mathfrak{L} \textup{(0-DFAMW)} \subsetneq \mathfrak{L}\textup{(0-DBHVA(1))}.$
\end{theorem}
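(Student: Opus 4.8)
The plan is to establish the theorem in two parts: the inclusion $\mathfrak{L}(\textup{0-DFAMW}) \subseteq \mathfrak{L}(\textup{0-DBHVA(1)})$, and then its properness via a separating witness language. For the inclusion, I would observe that a 0-DFAMW is a stateless real-time deterministic finite automaton with a register holding a positive rational, initialized to $1$, multiplied at each step by a positive rational depending only on the scanned symbol (the single state and the blindness leave no other dependence). A 0-DBHVA(1) is a stateless one-dimensional blind homing vector automaton whose $1\times 1$ ``matrices'' are just rational scalars, starting from some initial scalar $v_0$ and accepting when the final scalar equals $v_0$. The natural simulation sets $v_0 = (1)$ and reuses exactly the same per-symbol rational multipliers as the 0-DFAMW; then the register of the DFAMW equals $1$ at the end of the input precisely when the scalar returns to its initial value $v_0 = 1$. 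The only subtlety is the end-marker: a 0-DFAMW does not process $\dollar$ and accepts when the computation ends in its (accepting) single state with register value $1$, while a 0-DBHVA(1) likewise does not read an end-marker and accepts when the vector returns to $v_0$. Since both acceptance conditions coincide (register equals $1$ $\iff$ scalar equals $v_0$), the simulation is direct and the inclusion follows.

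For the properness, I would exhibit a language recognized by a 0-DBHVA(1) but not by any 0-DFAMW, and the obvious lever is that a 0-DBHVA(1) may multiply by \emph{negative} rationals whereas a 0-DFAMW is confined to \emph{positive} rationals. The plan is to use a language that records parity-type information through a sign change, something like a language over $\{a,b\}$ where each symbol multiplies the scalar by $-1$ (or by $\pm c$ for some fixed $c$) so that acceptance forces an even count, i.e. a language whose membership depends on the sign of the accumulated product returning to $+1$. Concretely, I would let the 0-DBHVA(1) start at $v_0 = (1)$, multiply by $(-1)$ on each input symbol, and accept exactly the strings of even length; this is trivially a 0-DBHVA(1) language.

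The main obstacle — and the heart of the proof — is showing that no 0-DFAMW recognizes the chosen language. Here I would exploit the structural constraints already developed in the section: a 0-DFAMW has a positive-rational register, so the product of its per-symbol multipliers over any input is a positive rational, and the recognized language is commutative (the multipliers commute, so by the reasoning behind Theorem~\ref{thm: comm} the language recognized is commutative). More sharply, because all multipliers are positive, the register value is a multiplicative function of the Parikh image, and requiring it to equal $1$ amounts to a homogeneous multiplicative (equivalently, additive, after taking logarithms) condition on the symbol counts that cannot encode a genuine parity or sign constraint; any language forced by a negative-multiplier sign flip falls outside this positive-register reachability structure. The delicate step is arguing rigorously that the even-length (or sign-based) condition is not expressible as ``the product of positive rationals equals $1$'' — this needs a short argument that the set of accepted Parikh vectors of a 0-DFAMW is closed under the semigroup operation in a way incompatible with the parity constraint. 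I expect to close this gap either by the log-linear argument (accepted Parikh images form the nonnegative integer solutions of a linear homogeneous system, matching the characterization the paper is heading toward) or by directly showing that if a 0-DFAMW accepts some string of odd length it must accept related strings violating the parity language, yielding the desired contradiction.
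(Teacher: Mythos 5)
Your overall architecture is the same as the paper's: the inclusion is handled by observing that a 0-DFAMW is literally a 0-DBHVA(1) restricted to positive rational multipliers (the paper states exactly this equivalence right after the theorem), and properness is obtained from a witness language that exploits a negative multiplier, with the impossibility argument resting on positivity of the 0-DFAMW's multipliers. The paper's witness is $\mathtt{EVENAB}$ (strings $a^nb^n$ with $n$ even, recognized with multipliers $-2$ and $\frac{1}{2}$); yours is the even-length language via the multiplier $-1$. Both witnesses are fine, and your inclusion argument, including the end-marker remark, is correct.

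The gap is in how you propose to finish the impossibility half. Your first suggested route --- that the accepted Parikh vectors of a 0-DFAMW are ``closed under the semigroup operation in a way incompatible with the parity constraint'' --- fails as stated: the even-length Parikh set $\{(n_a,n_b) : n_a+n_b \equiv 0 \pmod 2\}$ \emph{is} closed under addition (even plus even is even), so additive closure alone yields no contradiction. What the linear-homogeneous-Diophantine characterization actually buys you is \emph{homogeneity}: if $2v$ solves $Av=0$ then so does $v$; since $(2,0)$ is accepted but $(1,0)$ is not, the parity set cannot be such a solution set. Alternatively (and this is the paper's one-line argument, which your second route gestures at but states backwards), argue directly: if a 0-DFAMW with multipliers $m_a,m_b>0$ recognized your language, then acceptance of $aa$ forces $m_a^2=1$, and positivity rules out $m_a=-1$, so $m_a=1$; but then $a$ is also accepted, contradicting oddness of its length. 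Note the direction: you do not assume it accepts an odd string --- you derive that it must. With that step made explicit, your proof is complete and essentially identical in spirit to the paper's.
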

\begin{proof}Let $ \mathtt{EVENAB}=\{a^nb^n |~n = 2k \mbox{ for some }k\geq 0 \}$. The following 0-DBHVA(1) recognizes $ \mathtt{EVENAB} $: The register is multiplied by $ (-2) $ and $ (\frac{1}{2}) $ when the machine reads an $ a $ and $ b $ respectively. 
	
	Suppose that there exists some  0-1DFAMW  recognizing $ \mathtt{EVENAB} $. Let $ m_a $ and $ m_b $ be the positive rational numbers that are multiplied by the register upon reading $ a $ and $ b $. Since $ aabb\in \mathtt{EVENAB} $, it is true that $ m_a^2m_b^2=1 $. Since both $ m_a $ and $ m _b$ are positive, it is not possible that $ m_am_b=-1 $. It follows that $ m_am_b=1 $, in which case the string $ ab $ is accepted and we get a contradiction. Hence we conclude that $ \mathtt{EVENAB} $ cannot be recognized by any 0-DFAMW.
\end{proof}

When the register is multiplied with only positive rational numbers, then we have $  \mathfrak{L} $(0-DFAMW)=$\mathfrak{L}$(0-DBHVA(1))$ _\mathbb{Q^+} $. The equivalences also hold for the nondeterministic models. 

For real-time, nondeterministic and blind machines, we can state the following theorem.

\begin{theorem}\label{thm: commu}
	If $ L \in \mathfrak{L}(\textup{0-NFAMW})$, then $ L $ is commutative.
\end{theorem}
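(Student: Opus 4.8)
The plan is to reduce the claim to the already-established Theorem~\ref{thm: comm} by observing that a 0-NFAMW is nothing but a one-dimensional stateless nondeterministic blind homing vector automaton whose transition matrices automatically commute. First I would spell out how a 0-NFAMW fits the 0-NBHVA(1) framework: its register holds a single rational number, so each multiplication by a positive rational $m$ is exactly multiplication by the $1 \times 1$ matrix $(m)$; the register is initialized to $1$, which plays the role of the initial vector $v_0 = (1)$; and since the single state is accepting and there is no equality test during computation, the machine accepts precisely when some nondeterministic path returns the register to its initial value $1$, which is exactly the homing acceptance condition for $v_0 = (1)$. Neither model processes the end-marker, so the two acceptance conventions coincide.

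The essential point is then immediate: any collection of $1 \times 1$ matrices is trivially commutative, because multiplication of rationals is commutative. Hence the set of matrices used by the 0-NFAMW, viewed as a 0-NBHVA(1), is a commutative set of matrices, and Theorem~\ref{thm: comm} yields at once that $L$ is commutative.

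If a self-contained argument is preferred over invoking Theorem~\ref{thm: comm}, I would reason directly. Suppose $w = w[1] \cdots w[n] \in L$ is accepted along a path that multiplies the register successively by $m_1, \ldots, m_n$ with $\prod_{i=1}^n m_i = 1$, where each $m_i$ is a legal multiplier for the scanned symbol $w[i]$. For any permutation $\tau$, the string $w' = w[\tau(1)] \cdots w[\tau(n)]$ has the same multiset of symbols, so assigning to position $j$ the multiplier $m_{\tau(j)}$ is again legal, and it drives the register to $\prod_j m_{\tau(j)} = \prod_i m_i = 1$ by commutativity of rational multiplication. Thus $w'$ is accepted, and more generally every $x$ with $\phi(x) = \phi(w)$ lies in $L$, giving $com(L) = L$.

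There is essentially no serious obstacle here; the only thing to be careful about is confirming that the acceptance and end-marker conventions of a 0-NFAMW really match those of a 0-NBHVA(1), in particular that the blind, without-equality, single-accepting-state assumptions make the ``register equals $1$'' test coincide exactly with the homing condition, so that Theorem~\ref{thm: comm} applies verbatim.
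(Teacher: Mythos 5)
Your proposal is correct and matches the paper's own proof: both reduce the claim to Theorem \ref{thm: comm} by viewing a 0-NFAMW as a 0-NBHVA(1), whose $1 \times 1$ matrices trivially commute. The extra care you take with the end-marker and acceptance conventions (and your optional direct permutation argument) is sound but not a different route.
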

\begin{proof}
	A 0-NFAMW is a 0-NBHVA(1) whose register is multiplied with only positive rational numbers. Therefore, $ L $ is also accepted by a 0-NBHVA(1). Since multiplication in dimension 1 is commutative, the result follows by Theorem \ref{thm: comm}.
\end{proof}

It is known that a bounded language is accepted by a 1NFAMW iff it is semilinear \cite{ISK76}. In the next theorem, we prove a similar result and characterize the class of languages recognized by 0-DFAMWs. We show that any language recognized by a 0-DFAMW is commutative and semilinear. Furthermore, any commutative language whose Parikh image is the set of nonnegative solutions to a system of linear homogeneous Diophantine equations can be recognized by a 0-DFAMW. 

\begin{theorem}
 $ L \in \mathfrak{L}(\textup{0-DFAMW}) $ iff $ L $ is commutative and $\phi(L) $ is the set of nonnegative integer solutions to a system of linear homogeneous Diophantine equations.
\end{theorem}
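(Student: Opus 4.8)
The plan is to prove both directions of the iff characterizing $\mathfrak{L}(\textup{0-DFAMW})$.

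\textbf{The forward direction} ($L \in \mathfrak{L}(\textup{0-DFAMW})$ implies the stated conditions). First, commutativity comes essentially for free: by Theorem \ref{thm: commu} any language recognized by a 0-NFAMW (hence a fortiori by a 0-DFAMW) is commutative, so I may immediately assume $L = com(L)$ and work entirely at the level of Parikh images. Next I would analyze the structure of a 0-DFAMW on alphabet $\Sigma=\{\sigma_1,\dots,\sigma_k\}$. Since the machine is stateless, deterministic, blind, and real-time, upon reading $\sigma_i$ it multiplies its register (initialized to $1$) by a fixed positive rational $m_i$. Writing each $m_i$ as a product of prime powers, reading a string $w$ multiplies the register by $\prod_i m_i^{|w|_{\sigma_i}}$, and $w$ is accepted iff this product equals $1$. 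Taking logarithms base each prime, the condition ``$\prod_i m_i^{|w|_{\sigma_i}} = 1$'' is equivalent to requiring, for every prime $p$ appearing in any $m_i$, that $\sum_i v_p(m_i)\,|w|_{\sigma_i} = 0$, where $v_p$ is the $p$-adic valuation. Each such prime yields one linear homogeneous Diophantine equation in the variables $(|w|_{\sigma_1},\dots,|w|_{\sigma_k}) = \phi(w)$ with integer coefficients $v_p(m_i)$. Therefore $\phi(L)$ is exactly the set of nonnegative integer solutions of the finite system $\{\sum_i v_p(m_i)\,x_i = 0\}_p$, as required.

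\textbf{The converse direction.} Suppose $L$ is commutative and $\phi(L)$ is the set of nonnegative integer solutions to a system $\sum_i a_{ji}x_i = 0$, $j=1,\dots,r$, with integer coefficients $a_{ji}$. I would build a 0-DFAMW realizing $L$ by reversing the construction above: choose $r$ distinct primes $p_1,\dots,p_r$ and, for each input symbol $\sigma_i$, define the multiplier $m_i = \prod_{j=1}^r p_j^{a_{ji}} \in \mathbb{Q}^+$ (a positive rational since negative exponents are allowed in the register's rational value). After reading $w$, the register holds $\prod_i m_i^{|w|_{\sigma_i}} = \prod_{j} p_j^{\sum_i a_{ji}|w|_{\sigma_i}}$. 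Because the $p_j$ are distinct primes, this product equals $1$ iff every exponent $\sum_i a_{ji}|w|_{\sigma_i}$ vanishes, i.e.\ iff $\phi(w)$ satisfies the whole system, i.e.\ iff $\phi(w)\in\phi(L)$. Since $L$ is commutative, membership depends only on $\phi(w)$, so the machine accepts exactly $L$.

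\textbf{Main obstacle.} The technical crux in both directions is the clean translation between the multiplicative condition on the register and the additive linear system, carried out prime-by-prime; the key facts that make this work are unique factorization of rationals into prime powers and the fact that distinct primes multiply to $1$ only when all exponents are zero. I expect the subtlest point to be ensuring the system is genuinely \emph{homogeneous} and that its nonnegative integer solution set is precisely $\phi(L)$ rather than merely containing it: homogeneity is automatic because the register starts at $1$ (contributing no constant term) and the blind, stateless nature forbids any threshold test that would introduce inequalities or offsets. One should also verify the degenerate edge cases (empty alphabet, or $m_i=1$ for some $i$ giving an all-zero column) cause no trouble, but these are routine.
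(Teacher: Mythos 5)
Your proof is correct and follows essentially the same route as the paper's: in the forward direction, factor each multiplier into primes so that the acceptance condition $\prod_i m_i^{|w|_{\sigma_i}}=1$ becomes one homogeneous linear equation per prime (your $v_p(m_i)$ coefficients are exactly the paper's exponent differences $x_{ji}-y_{ji}$), with commutativity supplied by Theorem \ref{thm: commu}; in the converse, encode the coefficients $a_{ji}$ of the given system as exponents of distinct primes in the multipliers. The only difference is cosmetic notation ($p$-adic valuations versus explicit numerator/denominator exponents), so no further comparison is needed.
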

\begin{proof} Let  $ L $ be a language over the alphabet $ \Sigma=\{\sigma_1,\dots,\sigma_n\}  $ recognized by a 0-DFAMW $ V $. Let $M=\{m_1,m_2,\dots,m_n\}$ be the set of rational numbers such that the register is multiplied with $m_i$ upon reading $\sigma_i$. Let $P=\{p_1,p_2,\dots ,p_k\}$ be the set of prime factors of the denominators and the numerators of the rational numbers in $M$. Then each $ m_i $ can be expressed as 
	$$
	m_i=\frac{p_1^{x_{1_i}}p_2^{x_{2_i}}\cdots p_k^{x_{k_i}}}{p_1^{y_{1_i}}p_2^{y_{2_i}}\cdots p_k^{y_{k_i}}} .
	$$  
	
	If a string $w$ is accepted by $V$, then the value of the register should be equal to 1 after reading $ w $, which is possible only if
	$$
	m_1^{w_{|\sigma_1|}}m_2^{w_{|\sigma_2|}}\cdots m_n^{w_{|\sigma_n|}}=1.
	$$
	This leads to the following system of linear Diophantine equations in $ n$ variables.
	
	\begin{align*}
	(x_{1_1}-y_{1_1})w_{|\sigma_1|}+(x_{1_2}-y_{1_2})w_{|\sigma_2|}+\dots + (x_{1_n}-y_{1_n})w_{|\sigma_n|}&=0\\
	(x_{2_1}-y_{2_1})w_{|\sigma_1|}+(x_{2_2}-y_{2_2})w_{|\sigma_2|}+\dots + (x_{2_n}-y_{2_n})w_{|\sigma_n|}&=0\\
	&\vdots\\
	(x_{k_1}-y_{k_1})w_{|\sigma_1|}+(x_{k_2}-y_{k_2})w_{|\sigma_2|}+\dots + (x_{k_n}-y_{k_n})w_{|\sigma_n|}&=0\\
	\end{align*}
	
	%	together with the restriction that $ w_{|\sigma_1|}(x_{l_1})+w_{|\sigma_2|}(x_{l_2})+\dots + w_{|\sigma_n|}(x_{l_n})$ is an even number.
	
	For $ j=1,\dots ,k $, the $ j $'th equation is stating that the exponent of $ p_j $ is equal to 0 after reading $ w$. 
	
	\begin{comment}
	In matrix form, we can state it as
	
	$$
	\mymatrix{llll}{
		x_{1_1}-y_{1_1}&x_{1_2}-y_{1_2}&\dots & x_{1_n}-y_{1_n}\\
		x_{2_1}-y_{2_1}&x_{2_2}-y_{2_2}&\dots & x_{2_n}-y_{2_n} \\
		\vdots&&& \\
		x_{k_1}-y_{k_1}&x_{k_2}-y_{k_2}&\dots & x_{k_n}-y_{k_n} \\  
	}\mymatrix{c}{
		w_{|\sigma_1|}\\
		w_{|\sigma_2|}\\
		\vdots \\
		w_{|\sigma_n|}\\
	}=\mymatrix{c}{
		0\\
		0\\
		\vdots \\
		0\\
	}, 
	$$
	
	\noindent or $ As=0 $. 
	
	\end{comment}
	
	Hence, we can conclude that the Parikh images of the accepted strings are the nonnegative solutions to a system of linear homogeneous Diophantine equations. $ L $ is commutative by Theorem \ref{thm: commu}. (One can further conclude that $ L $ is semilinear.)  
	
	For the converse, suppose that we are given a commutative language $ L $ over the alphabet $ \Sigma = \{\sigma_1,\dots,\sigma_n\} $. Let $ S $ be the set of Parikh images of the strings in $ L $. $ S $ is the set of nonnegative solutions to a system of, say, $ k $ linear homogeneous Diophantine equations in $ n $ unknowns,

\begin{align*}
a_{11}s_1 + a_{12}s_2+\dots +a_{1n}s_n&=0\\
a_{21}s_2 + a_{22}s_2+\dots +a_{2n} s_n&=0\\
&\vdots\\
a_{k1}s_1+a_{k2}s_2+\dots + a_{kn}s_n&=0\\
\end{align*} 
where $ \mypar{s_1 ~~~s_2 ~~~\dots~~~ s_n} \in S$. 
%\begin{align*}
%A[11]w_{|\sigma_1|}+A[12]w_{|\sigma_2|}+\dots +A[1n]w_{|\sigma_n|}&=0\\
%A[21]w_{|\sigma_1|}+A[22]w_{|\sigma_2|}+\dots +A[2n] w_{|\sigma_n|}&=0\\
%&\vdots\\
%A[k1]w_{|\sigma_1|}+A[k2]w_{|\sigma_2|}+\dots + A[kn]w_{|\sigma_n|}&=0\\
%\end{align*} 

We construct a 0-DFAMW $ V $ recognizing $ L $ as follows. We choose a set of $ k $ distinct prime numbers, $ P=\{p_1,p_2,\dots,p_k\} $. When $ V$ reads $ \sigma_i $, the register is multiplied by $$ m_i= p_1^{a_{1i}}p_2^{a_{2i}}\cdots p_k^{a_{ki}}.$$ Suppose that a string $ w $ is accepted by $ V $. Then 
	$$
	m_1^{w_{|\sigma_1|}}m_2^{w_{|\sigma_2|}}\cdots m_n^{w_{|\sigma_n|}}=1.
	$$
	The product is equal to 1 if all of the exponents of the prime factors are equal to 0, that is when $ \mypar{w_{|\sigma_1|} ~~~w_{|\sigma_2|}~~~ \dots ~~~ w_{|\sigma_n|} }\in S$. Hence we see that the set of accepted strings are those with Parikh image in $ S $. Since $ L $ is commutative, any string $ w $ with $ \phi(w) \in S $ belongs to $ L $ and we conclude that $ V $ recognizes $ L $.
	
\end{proof}

Note that $ \mathfrak{L} $(0-DFAMW)=$\mathfrak{L}$(0-1DFAMW), since a 0-1DFAMW that has an instruction to stay on some input symbol cannot read the rest of the string.

\subsection{Regular languages}

Let $ D $ be an $ n $-state deterministic finite automaton. Without loss of generality, we can enumerate its states as $ q_1,\ldots,q_n $ where $q_1$ is the initial state. Moreover we can denote $ q_i $ by $ e_i $, which is the basis vector in dimension $ n $ having 1 in its $ i $'th entry, and 0 otherwise. Besides, for each symbol $\sigma $, we can design a zero-one matrix, say $ A_\sigma $, that represents the transitions between the states, i.e. $ A_\sigma(i,j) = 1 $ if and only if $ D $ switches from $ q_i $ to $ q_j $ when reading symbol $\sigma $. Thus, the computation of $  D $ on an input, say $ w $, can be traced by matrix-vector multiplication:  
\[
e_j =  e_1 A_{w[1]} A_{w[2]} \cdots A_{w[|w|]}
\]
if and only if $  D $ ends its computation in $ q_j $ after reading $ w $.

Based on this representation, we can easily observe that if a language $ L $ is recognized by an $ n $-DFA whose initial state is the single accept state, then $ L $ is recognized by a 0-DBHVA($n$).

Let us give some examples of stateless HVAs recognizing regular languages.
\\

\noindent \textbf{Example 1:} \label{ex: 2} Let $\mathtt{AB_k}^*=\{a^kb^k\}^*$ for a given $k>1$, and let us construct a 0-DBHVA(2$ k $)  $ H_k $ recognizing $\mathtt{AB_k}^*$. The initial vector of $ H_k  $ is $ v_0=(1 ~~ 0 ~~ \cdots ~~ 0) $. 
For each $ a $, the value in the $ i $'th entry of the vector is transferred to the $ (i+1) $'st entry when $ 1 \leq i \leq k $, and, the rest of the entries are set to 0. For each $ b $, the value in the $ (i+k) $'th entry of the vector is transferred to the $ (i+k+1 \mod 2k) $'th entry, and the rest of the entries are set to 0, ($ 1 \leq i \leq k $). Thus, we return  to the initial vector if and only if after some number of $ (a^kb^k) $ blocks have been read.
\\

\noindent \textbf{Example 2:} \label{ex: 1}The language $ \modm $ is defined as
$
\modm = \{ a^i \mid i \mod m \equiv 0 \}.
$ 
It is easy to observe that any unary $n$-state DFA whose initial state is the single accept state can recognize either $ L_\varepsilon $ or $ \modm $ for some $m\leq n$. Hence, for any $ m>0 $, the language $ \modm  $ is recognized by a \textup{0-DBHVA($m$)}. 
\\

Note that if it is allowed to use arbitrary algebraic numbers in the transition matrices, then for every $ m>0 $, the language $ \modm $ is recognized by a \textup{0-DBHVA(2)} with initial vector $ v_0 = \mypar{1 ~~ 0} $ that multiplies its vector with the matrix $$ A_m = \mymatrix{cc}{ \cos  \frac{2 \pi}{m} & -\sin  \frac{2 \pi}{m} \\ \sin  \frac{2 \pi}{m} & \cos  \frac{2 \pi}{m} } $$ for each scanned symbol. (The entries of $A_m$ are algebraic for any $m$ \cite{Le93}.)    
\\

One may ask whether all regular languages $ L $ satisfying $ L= L^* $ can be recognized by stateless HVAs. We provide a negative answer for the deterministic model. The language $ \modtwothree $ is defined as $ \{ a^2,a^3 \}^* = a^* \setminus \{a\} $, satisfying that $ \modtwothree = \modtwothree^* $. $ \modtwothree $ cannot be recognized by any \textup{0-DHVA} since a 0-DHVA which accepts $ a^2 $ and $ a^3 $, also accepts $ a $ by Lemma \ref{lemma: diff}.

\subsection{Additional results on stateless homing vector automata}

One sees that the nonregular language $\mathtt{AB} = \{ a^nb^n | n \geq 0 \} $ cannot be recognized by any 0-NHVA($ k $) for any $ k $, since $ \mathtt{AB} \neq \mathtt{AB}^* $. On the other hand, $ \Leq = \{ x \in \{a,b\}^* \mid |x|_a = |x|_b \} $ can be recognized by a 0-DBHVA(1) with initial vector $  (1) $, and transition matrices $ A_a = (2) $ and $ A_b = \myvector{\frac{1}{2}} $. In the next theorem, we show that it is possible to recognize the star of $\mathtt{AB} $, even in the deterministic and blind computation mode with a stateless homing vector automaton. Note that $ \mathtt{AB}^* $ cannot be recognized by any 1NFAMW \cite{ISK76}.

\begin{theorem}
	\label{thm:0HVA-upal}
	The language $\mathtt{AB}^*= \{ \{a^nb^n\}^* \mid n \geq 0 \}$ is recognized by a \textup{0-DBHVA(10)}.
\end{theorem}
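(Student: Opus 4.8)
The plan is to reduce acceptance to a single \emph{nonnegative} numerical invariant that the machine can maintain purely by matrix multiplication, thereby sidestepping the fact that a blind homing automaton can only compare the \emph{final} vector against $v_0$ and never tests its vector during the scan. I would first record the combinatorial characterization. Cutting any input $w$ at every $b\to a$ transition produces segments of the form $a^{p}b^{q}$ (the leading segment may have $p=0$ and the trailing one may have $q=0$), and I claim $w\in\mathtt{AB}^*$ iff every segment has \emph{discrepancy} $p-q$ equal to $0$. Indeed, if all discrepancies vanish then the string cannot start with $b$ (a leading $a^{0}b^{q}$ would have discrepancy $-q\neq0$) nor end with $a$ (a trailing $a^{p}b^{0}$ would have discrepancy $p\neq 0$), so $w=a^{p_1}b^{p_1}a^{p_2}b^{p_2}\cdots$, which is exactly the block form of $\mathtt{AB}^*$; the converse is immediate. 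Note that distinct blocks may have distinct lengths, so the invariant must be computed \emph{per block}.

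The key idea is to let the vector carry, for the segment currently being read, both its running signed count $c$ (incremented on $a$, decremented on $b$) and its square $s=c^{2}$, together with an accumulator $\Phi$ holding the sum of the squared discrepancies of all \emph{completed} segments. All of these update linearly, since $(c\pm1)^{2}=c^{2}\pm 2c+1$ means reading a symbol changes $s$ by $\pm 2c+1$, an affine function of the stored values; a homogeneous coordinate $h\equiv 1$ supplies the additive constants. I would keep two copies of $(c,s)$ — an ``$a$-phase'' slot and a ``$b$-phase'' slot — maintaining the invariant that the inactive slot is identically zero, so that the current phase is encoded implicitly by which slot is nonzero. Then a single $A_a$ and a single $A_b$ suffice and no explicit state nor zero test is ever required: $A_b$ moves the active count into the $b$-slot with a $-1$ update, $A_a$ writes the incremented count into the $a$-slot, and crucially $A_a$ \emph{always} adds $s_B$ to $\Phi$. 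At a $b\to a$ boundary this folds the just-completed segment's $c^{2}$ into $\Phi$, while away from a boundary $s_B=0$ makes the fold vacuous — so the boundary need not be detected at all.

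Acceptance is the homing condition $v_0 A_{w[1]}\cdots A_{w[|w|]}=v_0$, with $v_0$ setting $h=1$ and all counters and $\Phi$ to $0$. I would verify that the final vector returns to $v_0$ exactly when $c_A=s_A=c_B=s_B=0$ and $\Phi=0$. Nonnegativity is decisive here: because each completed segment contributes its \emph{square}, $\Phi$ is monotonically nondecreasing, so $\Phi=0$ at the end iff every completed segment was balanced, with no way for an excess of $a$'s in one block to cancel an excess of $b$'s in another; and since no end-marker is available, the last segment is accounted for automatically — if it is a balanced $b$-segment then $c_B=s_B=0$, whereas an unbalanced or unterminated last segment leaves $c_B\neq0$ (or $c_A\neq0$), which the homing test rejects. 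Together these give $v_f=v_0$ iff $w\in\mathtt{AB}^*$; I would check this against the representative cases $\varepsilon$, $aabb$, $abab$, $aabbab$ (accept) and $aab$, $ba$, $abba$, $abbaab$ (reject).

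The hard part, and the reason naive encodings fail, is precisely this cancellation/zero-test obstacle. Any invertible (group) encoding is hopeless: enforcing $a^{n}b^{n}\mapsto I$ per block would force $A_b=A_a^{-1}$ and collapse the whole product to the global count $\#a-\#b$; and any \emph{linear} accumulator of signed discrepancies can vanish through cancellation across blocks. Replacing discrepancies by their squares turns ``every block balanced'' into ``a sum of nonnegative integers is $0$,'' which a blind machine \emph{can} track faithfully, and which is linearly computable thanks to the identity $(c\pm1)^{2}=c^{2}\pm 2c+1$. Finally I would count coordinates — $h$, the two $(c,s)$ slots, and $\Phi$ — and pad with inert coordinates fixed at their initial values so as to present the construction within the stated dimension, yielding a \textup{0-DBHVA($10$)} that recognizes $\mathtt{AB}^{*}$.
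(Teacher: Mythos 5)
Your construction is correct, and it rests on exactly the same key insight as the paper's proof: since a blind homing automaton can neither test its vector mid-scan nor prevent signed counts from cancelling across blocks, one must accumulate the \emph{square} of each completed block's discrepancy into a dedicated entry, which is then monotonically nondecreasing and returns to $0$ only if every block was balanced; and, as in the paper, the accumulation can safely be triggered on \emph{every} $a$ because the $b$-phase register is zero except right after an unbalanced block, so the $b\to a$ boundary never needs to be detected. Where you differ is in how the square is obtained linearly. The paper builds a $3$-dimensional machine whose block-discrepancy entry is $2^{i-j}-1$ (an exponential encoding of the count), then tensors that machine with itself so that the ninth entry of the product machine is automatically the square of the third, and finally adjoins a tenth accumulator entry --- hence dimension $10$ exactly. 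You instead maintain the signed count $c$ and its square $s=c^2$ directly, exploiting the affine identity $(c\pm 1)^2=c^2\pm 2c+1$ with a homogeneous coordinate, and keep two $(c,s)$ slots (one per phase) plus the accumulator, for a total of $6$ coordinates padded up to $10$. Your route is more economical and more elementary (it shows dimension $6$ suffices and avoids tensor products altogether), while the paper's tensor trick is more modular: it squares all entries of an existing machine without re-deriving any update rules, a device reused elsewhere in that line of work. One presentational remark: since you never write the matrices explicitly, you should at least note (as you implicitly verified) that each of your case-split update rules is realized by a \emph{single} matrix per symbol --- e.g.\ $c_A'=c_A+h$, $s_A'=s_A+2c_A+h$, $\Phi'=\Phi+s_B$, $c_B'=s_B'=0$ for $A_a$ --- because statelessness and blindness forbid choosing different matrices in the two phases; this is precisely why the zero-inactive-slot invariant is needed.
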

\begin{proof} We are going to construct a \textup{0-DBHVA(10)} recognizing  $\mathtt{AB}^*$. We start with a 0-DBHVA(3) named $H $, whose initial vector is $ v_0 = (1 ~~ 0 ~~ 0) $. When reading a symbol $a$, $ H $ applies the matrix 
	$$ A_a =  \mymatrix{ccc}{1 & 1 & 0  \\ 0 & 2 & 0  \\ 0 & 0 & 0},
	$$  
	which maps $ ( 1 ~~ t_2 ~~ t_3 ) $ to $ ( 1 ~~ 2t_2 + 1 ~~ 0 ) $. This operator encodes the count of $ a$'s in the value of the second entry. Started with the initial vector, if $ H $ reads $ i $ $a$'s, the vector becomes $ ( 1 ~~ 2^i -1 ~~ 0 ) $.
	
	When reading a symbol $ b $, $ H $ applies the matrix 
	$$ A_b = \mymatrix{rrr}{1 & 0 & -\frac{1}{2} \\ 0 & 0 & \frac{1}{2}  \\  0 & 0 & \frac{1}{2}},
	$$
	that  sets the value of the second entry to 0 and decrements the counter mentioned above, which it maintains in 
	the third entry. A vector starting as $ ( 1 ~~ 2^i -1 ~~ 0 ) $
	would get transformed to  $ ( 1 ~~ 0 ~~ 2^{i-j}-1 ) $
	after reading $ j $  $ b $'s.
	
	Note that the third entry is different than 0 when the number of $ a $'s and $ b $'s do not match in a block. If symbol $ a $ is read after a block of $ b $'s, the counting restarts from 0, and the value of the third entry is again set to 0.
	
	%	 The second entry never takes a negative value, while the third entry may take some negative values if $ M $ reads a $ a^jb^{j}b^+ $ 
	%   block  when starting from $ v_0 $, where $ j \geq 0 $.    
	%	
	By tensoring $ H $ with itself, we obtain a 0-DBHVA(9). The  initial vector of this new machine is $ v_0 \otimes v_0 $, and its operators are $ A_a \otimes A_a $ and $ A_b \otimes A_b $.
	
	Based on this tensor machine, we define our final 0-DBHVA(10), say, $ H' $. The initial vector is $v_0'= (v_0 \otimes v_0 ~~ 0) = (1 ~~ 0 ~~ \cdots ~~ 0).$
	
	%	If the current vector is equal to $ v_0 $ and the next reading symbol is $ b $, then the vector is multiplied by zero matrix. Therefore, the first symbol after any string from $ \{ a^n b^n \mid n > 0 \}^* $ is always expected to be an $ a $. If it is a $ b $, the input is rejected by $ M'' $ as detailed below.
	
	$ H' $ applies the matrices $A_a'$ and $ A_b' $ for each scanned $ a $ and $ b $, respectively:
	\[
	A_a' = \mymatrix{c|c}{ A_a \otimes A_a & \begin{array}{c}0 \\ \vdots \\ 0 \\ 1 \end{array} \\ \hline 0 ~~ \cdots ~~ 0 & 1  }~~~A_b' = \mymatrix{c|c}{ A_b \otimes A_b & \begin{array}{c}0 \\ \vdots \\ 0\\0 \end{array} \\ \hline 0 ~~ \cdots ~~ 0 & 1  }.
	\]
	The matrix $ A_a' $ implements $ A_a \otimes A_a $ in its first nine entries. Additionally, it adds the value of the ninth entry of the vector to the tenth one. The operator $ A_b' $ implements $ A_b \otimes A_b $ in its first nine entries. Additionally, it preserves the value of the tenth entry. Suppose that the vector of $ H $ is equal to $ (1~~ a_2 ~~ a_3) $ at some point of the computation. Due to the property of the tensor product, the vector entries of $ H' $ hold the values $ (1 ~~a_2~~ a_3~~ a_2~~ a_2a_2~~ a_2a_3~~ a_3~~ a_3a_2~~ a_3a_3~~ 1) $.

	The value of the third entry is important as it becomes nonzero when the number of $ a $'s and $ b $'s are different in a block. Hence, the ninth entry holds some positive value when this is the case. When $ H' $ starts reading a new block of $ a $'s, it adds the ninth entry to the tenth one with the help of the matrix $ A_a' $. Once this entry is set to a positive value, it never becomes 0 again, since its value can only be changed by the addition of the ninth entry, which is always a square of a rational number. Therefore, such an input is never accepted.
	
	If this is not the case and the number of $ a $'s and $ b $'s are equal to each other in every block, then the third and therefore the ninth entry are always equal to 0 before $ H' $ starts reading a new block. We can conclude that the value of the tenth entry will be equal to 0 at the end of the computation. As the input string ends with a $ b $, the second entry is also set to 0 upon multiplication with $ A_b' $. Hence, all the entries except the first one are equal to 0 and the vector is equal to its initial value which leads to the acceptance of the input string. 
	
	Thus, the only accepted inputs are the members of $ \mathtt{AB}^* $.
\end{proof}

Nondeterministic HVAs are more powerful than their deterministic variants in terms of language recognition in general. In the next theorem, we show that this is also true for the stateless models.

\begin{theorem}
	\begin{enumerate}[i.]
		\item $\mathfrak{L} 
		\textup{(0-DBHVA)} \subsetneq  \mathfrak{L} \textup{(0-NBHVA)}. $
		\item $  \mathfrak{L} 
		\textup{(0-DHVA)} \subsetneq  \mathfrak{L} \textup{(0-NHVA)}. $
	\end{enumerate}
\end{theorem}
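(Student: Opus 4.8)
The containments $\mathfrak{L}(\textup{0-DBHVA}) \subseteq \mathfrak{L}(\textup{0-NBHVA})$ and $\mathfrak{L}(\textup{0-DHVA}) \subseteq \mathfrak{L}(\textup{0-NHVA})$ are immediate, since a deterministic machine is the special case of a nondeterministic one whose transition set is a singleton at every step. So the work is entirely in exhibiting a witness language that is recognized nondeterministically but by no deterministic stateless HVA. The plan is to use a single witness, namely $\modtwothree = \{a^2,a^3\}^* = a^* \setminus \{a\}$, to settle both parts at once.

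For the negative side, recall that $\modtwothree \notin \mathfrak{L}(\textup{0-DHVA})$: a 0-DHVA accepting both $a^2$ and $a^3$ would, by Lemma \ref{lemma: diff} (taking $w_1 = a^2$ and $w_1 w_2 = a^3$, so $w_2 = a$), also accept $a$, which is forbidden. Since a blind machine is just a non-blind one that never consults the status of its vector, we have $\mathfrak{L}(\textup{0-DBHVA}) \subseteq \mathfrak{L}(\textup{0-DHVA})$, and hence $\modtwothree \notin \mathfrak{L}(\textup{0-DBHVA})$ as well. Thus $\modtwothree$ lies outside the deterministic class in both the blind and the non-blind settings.

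For the positive side I would construct a 0-NBHVA(6) recognizing $\modtwothree$. Let $C$ be the $6 \times 6$ cyclic permutation matrix ($e_i \mapsto e_{i+1 \bmod 6}$), take $v_0 = e_1$, and for the symbol $a$ offer the two nondeterministic choices $C^2$ and $C^3$. Reading $a^n$ while choosing $C^2$ exactly $p$ times and $C^3$ exactly $n-p$ times yields the total cyclic shift $2p + 3(n-p) = 3n - p \pmod 6$, so some computation path returns the vector to $v_0$ iff there exists $p$ with $0 \le p \le n$ and $p \equiv 3n \pmod 6$. A short case analysis on the parity of $n$ shows that such a $p$ exists exactly when $n = 0$ or $n \ge 2$, and fails only for $n = 1$; hence the recognized language is precisely $a^* \setminus \{a\} = \modtwothree$. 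The matrices are permutations and the status is never checked, so this machine is blind, placing $\modtwothree \in \mathfrak{L}(\textup{0-NBHVA}) \subseteq \mathfrak{L}(\textup{0-NHVA})$.

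Combining the two sides, $\modtwothree$ separates deterministic from nondeterministic stateless HVAs in both modes, which proves (i) and (ii). The one genuinely delicate step is the positive construction: the coprimality of $2$ and $3$ guarantees that their nonnegative combinations cover every integer $n \ge 2$ but miss $n = 1$, and the idea is to realize exactly this gap as a ``return to the origin'' condition inside a cyclic group of order $\mathrm{lcm}(2,3) = 6$, which the nondeterministic choice of shift amount lets a single stateless machine test. Everything else is routine.
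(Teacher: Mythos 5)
Your proof is correct, but it takes a genuinely different route from the paper's. Both arguments use Lemma \ref{lemma: diff} for the negative side (the paper takes $w_1=b$, $w_1w_2=ba$; you take $w_1=a^2$, $w_1w_2=a^3$), but the witnesses and the positive constructions differ substantially: the paper separates the classes with the nonregular language $\mathtt{LEQ}=\{x \in \{a,b\}^* \mid |x|_a \leq |x|_b\}$, recognized by a \emph{one-dimensional} 0-NBHVA that multiplies its register by $(2)$ on $a$ and nondeterministically by $(\frac{1}{2})$ or $(1)$ on $b$, whereas you separate them with the \emph{regular unary} language $\modtwothree$, recognized by a six-dimensional 0-NBHVA built from powers of a cyclic permutation matrix, with the coprimality of $2$ and $3$ realized as a return-to-origin condition in $\mathbb{Z}_6$ (your case analysis is right: a path returns to $e_1$ after $a^n$ iff some $p \equiv 3n \pmod 6$ satisfies $0 \le p \le n$, which holds exactly for $n=0$ and $n \ge 2$). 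The paper's approach buys economy --- dimension $1$ suffices, so nondeterminism already helps for scalar registers --- while yours buys something the paper explicitly leaves open: the paper proves only that $\modtwothree \notin \mathfrak{L}(\textup{0-DHVA})$ and then asks, among its open questions, whether 0-NHVAs recognize every regular $L$ with $L=L^*$; your permutation-matrix construction shows $\modtwothree \in \mathfrak{L}(\textup{0-NBHVA})$, resolving that question positively for this particular candidate language and thereby eliminating it as a potential counterexample. Both of your inclusion claims (determinism as a special case of nondeterminism, and blind as a special case of non-blind by ignoring the status bit) are sound, so the single witness does settle parts (i) and (ii) simultaneously, just as $\mathtt{LEQ}$ does in the paper.
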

\begin{proof}
	Let us construct a 0-NBHVA(1) $ H $ recognizing $\mathtt{LEQ}=\{x \in \{a,b\}^* \mid |x|_a \leq |x|_b\}$. Starting with the initial vector $(1) $, $ H $ multiplies its vector with $ A=(2) $ for each $a$ and with $B_1=(\frac{1}{2})$ or $B_2=(1) $ for each $b$ nondeterministically.
	
	Suppose that $\mathtt{LEQ}$ can be recognized by a 0-DHVA($k$) $ H' $. The strings $ w_1=b $ and $ w_2=ba $ are accepted by $ H' $. By Lemma \ref{lemma: diff}, the string $w_3=a$ is also accepted by $ H' $. We obtain a contradiction, and conclude that $\mathtt{LEQ}$ cannot be recognized by any 0-DHVA($k$).
\end{proof}

Let us look at some closure properties for the stateless models. All of the stateless models are closed under the star operation since for any language recognized by a stateless homing vector automaton, it is true that $ L=L^* $.

\begin{theorem}
	i. $ \mathfrak{L}\textup{(0-DBHVA)} $ is closed under the following operation:\\
	\indent	 a) intersection\\
	ii. $ \mathfrak{L}\textup{(0-DBHVA)} $ and $ \mathfrak{L}\textup{(0-DHVA)} $ are not closed under the following operations:\\	
	\indent			a) union\\
	\indent			b) complement\\
	\indent			c) concatenation
\end{theorem}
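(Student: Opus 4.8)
The plan is to settle the intersection closure (i.a) by a direct-sum construction, and to obtain each non-closure result (ii.a--c) from a concrete unary counterexample that violates either the identity $L=L^*$ that every stateless homing vector automaton must satisfy, or the constraint of Lemma~\ref{lemma: diff}. Throughout I use that every \textup{0-DBHVA} is a special case of a \textup{0-DHVA} (set the $=$ and $\neq$ transitions equal), so that $\mathfrak{L}\textup{(0-DBHVA)}\subseteq\mathfrak{L}\textup{(0-DHVA)}$, and that the witnesses $\mathtt{MOD}_2=\{a^2\}^*$ and $\mathtt{MOD}_3=\{a^3\}^*$ are recognized by stateless blind DHVAs by Example~2.

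For i.a), suppose $L_1$ is recognized by a \textup{0-DBHVA($k_1$)} with initial vector $v_1$ and matrices $\{A_\sigma\}$, and $L_2$ by a \textup{0-DBHVA($k_2$)} with initial vector $v_2$ and matrices $\{B_\sigma\}$. I would build a \textup{0-DBHVA($k_1+k_2$)} with initial vector $(v_1~~v_2)$ that applies, for each $\sigma\in\Sigma$, the block-diagonal matrix
\[
C_\sigma=\mymatrix{cc}{A_\sigma & 0 \\ 0 & B_\sigma}.
\]
After reading an input $w$ the vector is $(v_1A_w~~v_2B_w)$, which returns to the initial vector exactly when $v_1A_w=v_1$ and $v_2B_w=v_2$ hold at once, i.e.\ exactly when $w\in L_1\cap L_2$.

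For the non-closure parts, for union (ii.a) note that $a^2,a^3\in \mathtt{MOD}_2\cup\mathtt{MOD}_3$ while $a^5=a^2a^3\notin\mathtt{MOD}_2\cup\mathtt{MOD}_3$ (as $5$ is divisible by neither $2$ nor $3$); hence $\mathtt{MOD}_2\cup\mathtt{MOD}_3\neq(\mathtt{MOD}_2\cup\mathtt{MOD}_3)^*$, so this union is recognized by no stateless homing vector automaton. For complement (ii.b) take $L=\mathtt{MOD}_2$: its complement $\overline{L}$ is the set of odd-length unary strings, which contains $a$ but not $a\cdot a=a^2$, so $\overline{L}\neq(\overline{L})^*$ and $\overline{L}$ is again unrecognizable. (For the blind class this also follows from i.a) and ii.a) via De Morgan.) For concatenation (ii.c) observe that $\mathtt{MOD}_2\cdot\mathtt{MOD}_3=\{a^{2i+3j}\mid i,j\ge0\}=\modtwothree$, which by the remark after Example~2 (via Lemma~\ref{lemma: diff}) is recognized by no \textup{0-DHVA}, and therefore by no \textup{0-DBHVA}. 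In every case the two operands lie in $\mathfrak{L}\textup{(0-DBHVA)}\subseteq\mathfrak{L}\textup{(0-DHVA)}$, while the result lies outside both classes, which is exactly what is required.

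The only genuinely delicate step is the intersection construction. The temptation, given the heavy use of tensor products elsewhere in the paper, is to take $A_\sigma\otimes B_\sigma$; but then the final vector is $(v_1A_w)\otimes(v_2B_w)$, and equality to $v_1\otimes v_2$ only forces $v_1A_w=cv_1$ and $v_2B_w=c^{-1}v_2$ for some scalar $c$, which need not be $1$. Replacing the tensor product by the direct sum removes this scalar ambiguity by keeping the two coordinate blocks independent, making the ``if and only if'' hold in both directions. The remaining parts are routine, as each reduces to the $L=L^*$ property or the already-established status of $\modtwothree$.
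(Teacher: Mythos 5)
Your proof is correct, and parts ii.a--c coincide with the paper's own argument: the paper uses exactly the same witnesses ($\mathtt{MOD_2}$ and $\mathtt{MOD_3}$ for union and concatenation, the complement of $\modm$ for complementation, and the identity $\mathtt{MOD_2}\cdot\mathtt{MOD_3}=\modtwothree$), and the same obstructions; your appeal to the $L=L^*$ property for union, where the paper invokes Lemma \ref{lemma: diff}, is an immaterial variation since both facts are established in the paper and either one suffices. The genuine difference is in part i.a): the paper gives no construction at all, merely observing that the intersection construction of \cite{SDS16} for DBHVAs remains valid for stateless machines, whereas you supply an explicit, self-contained direct-sum construction. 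Your construction is sound---blindness and determinism mean each symbol corresponds to a single matrix, the block-diagonal matrices keep the two coordinate blocks independent, and the return-to-initial-vector condition then decomposes as the conjunction $v_1A_w=v_1$ and $v_2B_w=v_2$---and it visibly preserves statelessness, which is the one point the paper's citation leaves to the reader to check. Your cautionary remark about the tensor-product alternative is also well taken: from $x\otimes y=v_1\otimes v_2$ one can only conclude $x=cv_1$ and $y=c^{-1}v_2$ for some nonzero scalar $c$, so a naive tensor construction could accept strings outside $L_1\cap L_2$; the direct sum eliminates this scalar ambiguity. In short, the paper buys brevity by citation, while your version buys a proof that stands on its own.
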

\begin{proof} 
	\textit{i. a)} The proof in \cite{SDS16} that DBHVAs are closed under intersection is also valid for stateless models.\\
	\textit{ii. a)} The languages $ \mathtt{MOD_2}$ and $ \mathtt{MOD_3}$ are recognized by 0-DBHVAs by Example \ref{ex: 1}. Their union cannot be recognized by any 0-DHVA, since a 0-DHVA accepting the strings $a^2$ and $a^3$ should also accept the non-member string $a$ by Lemma \ref{lemma: diff}.\\
	\textit{b)} Complement of the language $ \modm $ ($m>1$)is not recognized by any 0-NHVA, since $ \overline{\modm} $ contains $ a $ and any 0-NHVA accepting $ a $ accepts any member of $a^*$.\\
	c) The languages $ \mathtt{MOD_2}$ and $ \mathtt{MOD_3}$ are recognized by 0-DBHVAs by Example \ref{ex: 1}. Their concatenation $\modtwothree$ cannot be recognized by any 0-DHVA.~~~
\end{proof}

$ \mathfrak{L}\textup{(0-NBHVA)} $ is closed under intersection.  $ \mathfrak{L}\textup{(0-NBHVA)} $ and $ \mathfrak{L}\textup{(0-NHVA)} $ are not closed under complement. The proofs are identical.

\section{Open questions and future work}

We proved that 1NBHVAs are more powerful than extended finite automata when both are defined over $ 2 \times 2 $ integer matrices. Is this result still true when both models are defined over $ 3 \times 3  $ integer matrices?

Do 0-NHVAs recognize every regular language $L$ satisfying $ L=L^*$? Is there any nonregular language $L$ satisfying $L=L^*$ that cannot be recognized by any stateless HVA?
 
We proved that any language recognized by a 0-NFAMW is commutative. What can we say about the non-blind case?

We gave a characterization for the class of languages recognized by 0-DFAMW. Can we give a similar characterization for the non-blind and nondeterministic models?

\section*{Acknowledgments} This research was supported by Bo\u{g}azi\c{c}i University Research Fund (BAP) under grant number 11760. Salehi is partially supported by T\"{U}B\.{I}TAK (Scientific and Technological Research Council of Turkey). Yakary{\i}lmaz is partially supported by ERC Advanced Grant MQC. We thank Flavio D'Alessandro for his helpful answers to our questions, and the anonymous reviewers for their constructive comments.

\bibliographystyle{splncs03}
\bibliography{references,tcs}

\begin{thebibliography}{10}
\providecommand{\url}[1]{\texttt{#1}}
\providecommand{\urlprefix}{URL }

\bibitem{BMY16}
Belovs, A., Montoya, J.A., Yakary{\i}lmaz, A.: Looking for pairs that hard to
  separate: {A} quantum approach. In: Implementation and Application of
  Automata, {CIAA'16}. {LNCS}, vol. 9705, pp. 213--223. Springer (2016)

\bibitem{BMY17}
Belovs, A., Montoya, J.A., Yakary{\i}lmaz, A.: On a conjecture by
  \mbox{Christian Choffrut}. International Journal of Foundations of Computer
  Science  28(5),  483--502 (2017)

\bibitem{Co05}
Corson, J.M.: Extended finite automata and word problems. International Journal
  of Algebra and Computation  15(03),  455--466 (2005)

\bibitem{DM00}
Dassow, J., Mitrana, V.: Finite automata over free groups. International
  Journal of Algebra and Computation  10(06),  725--737 (2000)

\bibitem{DESW11}
Demaine, E.D., Eisenstat, S., Shallit, J., Wilson, D.A.: Remarks on separating
  words. In: Descriptional Complexity of Formal Systems, {DCFS'11}. {LNCS},
  vol. 6808, pp. 147--157. Springer (2011)

\bibitem{DCY16A}
D\'{i}az-Caro, A., Yakary{\i}lmaz, A.: Affine computation and affine automaton.
  In: Computer Science - Theory and Applications - 11th International Computer
  Science Symposium in Russia, {CSR'16}. {LNCS}, vol. 9691, pp. 1--15. Springer
  (2016)

\bibitem{FMR67}
Fischer, P.C., Meyer, A.R., Rosenberg, A.L.: Real time counter machines. In:
  Proceedings of the 8th Annual Symposium on Switching and Automata Theory,
  {SWAT'67} {(FOCS)}. pp. 148--154 (1967)

\bibitem{GK86}
Goral{\v{c}}{\'\i}k, P., Koubek, V.: On discerning words by automata. In:
  International Colloquium on Automata, Languages, and Programming, {ICALP'86}.
  LNCS, vol. 226, pp. 116--122. Springer (1986)

\bibitem{Gr78}
Greibach, S.A.: Remarks on blind and partially blind one-way multicounter
  machines. Theoretical Computer Science  7,  311--324 (1978)

\bibitem{Ib78}
Ibarra, O.H.: Reversal-bounded multicounter machines and their decision
  problems. Journal of the ACM  25(1),  116--133 (1978)

\bibitem{IKO10}
Ibarra, O.H., Karhum{\"{a}}ki, J., Okhotin, A.: On stateless multihead
  automata: Hierarchies and the emptiness problem. Theoretical Computer Science
   411(3),  581--593 (2010)

\bibitem{ISK76}
Ibarra, O.H., Sahni, S.K., Kim, C.E.: Finite automata with multiplication.
  Theoretical Computer Science  2(3),  271 -- 294 (1976)

\bibitem{Ka09}
Kambites, M.: Formal languages and groups as memory. Communications in Algebra
  37(1),  193--208 (2009)

\bibitem{KMO09}
Kutrib, M., Messerschmidt, H., Otto, F.: On stateless deterministic restarting
  automata. In: 35th Conference on Current Trends in Theory and Practice of
  Computer Science, {SOFSEM'09}. {LNCS}, vol. 5404, pp. 353--364. Springer
  (2009)

\bibitem{Le93}
Lehmer, D.H.: A note on trigonometric algebraic numbers. The American
  Mathematical Monthly  40(3),  165--166 (1933)

\bibitem{Pau00}
P{\u{a}}un, G.: Computing with membranes. Journal of Computer and System
  Sciences  61(1),  108--143 (2000)

\bibitem{RK10}
Render, E., Kambites, M.: Semigroup automata with rational initial and terminal
  sets. Theoretical Computer Science  411(7-9),  1004--1012 (2010)

\bibitem{Ro89}
Robson, J.M.: Separating strings with small automata. Information Processing
  Letters  30(4),  209--214 (1989)

\bibitem{SDS16}
Salehi, {\"{O}}., D'Alessandro, F., Say, A.C.C.: Language classes associated
  with automata over matrix groups. In: Proceedings of the Eighth Workshop on
  Non-Classical Models of Automata and Applications, {NCMA'16}. books@ocg.at,
  vol. 321, pp. 287--300. {\"{O}}sterreichische Computer Gesellschaft (2016)

\bibitem{SS15}
Salehi, {\"{O}}., Say, A.C.C.: Homing vector automata. In: Proceedings of the
  Seventh Workshop on Non-Classical Models of Automata and Applications,
  {NCMA'15}. books@ocg.at, vol. 318, pp. 193--205. {\"{O}}sterreichische
  Computer Gesellschaft (2015)

\bibitem{SSD16}
Salehi, {\"{O}}., Say, A.C.C., D'Alessandro, F.: Homing vector automata. RAIRO
  - Theoretical Informatics and Applications  50(4),  371--386 (2016)

\bibitem{SSa18}
Salehi, {\"{O}}., Say, A.C.C.: Extended finite automata and decision problems
  for matrix semigroups. CoRR  abs/1807.05516 (2018),
  \url{http://arxiv.org/abs/1807.05516}

\bibitem{SYS13}
Salehi, {\"{O}}., Yakary{\i}lmaz, A., Say, A.C.C.: Real-time vector automata.
  In: Proceedings of the 19th International Conference on Fundamentals of
  Computation Theory, {FCT'13}. {LNCS}, vol. 8070, pp. 293--304. Springer
  (2013)

\bibitem{SayY14}
Say, A.C.C., Yakary{\i}lmaz, A.: Quantum finite automata: A modern
  introduction. In: Computing with New Resources. LNCS, vol. 8808, pp.
  208--222. Springer (2014)

\bibitem{Tu68}
Turakainen, P.: On stochastic languages. Information and Control  12(4),
  304--313 (1968)

\bibitem{Tu69}
Turakainen, P.: Generalized automata and stochastic languages. Proceedings of
  the American Mathematical Society  21,  303--309 (1969)

\bibitem{Yak12}
Yakary{\i}lmaz, A.: Superiority of one-way and realtime quantum machines. RAIRO
  - Theoretical Informatics and Applications  46(4),  615--641 (2012)

\bibitem{YDI08}
Yang, L., Dang, Z., Ibarra, O.H.: On stateless automata and {P} systems. Int.
  J. Found. Comput. Sci.  19(5),  1259--1276 (2008)

\end{thebibliography}

\end{document}